\newtheorem{theorem}{Theorem}
\newtheorem{cor}[theorem]{Corollary}
\newtheorem{lemma}[theorem]{Lemma}
\newtheorem{defn}[theorem]{Definition}
\theoremstyle{definition}
\theoremstyle{remark}
\DeclareMathAlphabet{\mathpzc}{OT1}{pzc}{m}{it}
\newcommand{\grp}[1]{\mathpzc{#1}}	
\newcommand{\set}[1]{\mathpzc{#1}}	
\newcommand{\cpx}{\mathbb C}			
\newcommand{\ff}[1]{{\mathbb F}_{#1}}		
\newcommand{\ffs}[1]{{\mathbb F}_{#1}^\star}	
\newcommand{\ffx}[1]{\ff{#1}[x]}		
\DeclareMathOperator{\image}{Im}
\newcommand{\im}[1]{\image(#1)}
\newcommand{\ims}[1]{\image(#1)\setminus \{0\}}
\DeclareMathOperator{\res}{Res}			
\DeclareMathOperator{\pres}{P-Res}		
\DeclareMathOperator{\Tr}{Tr}
\newcommand{\Trsub}[2]{\Tr_{#1/#2}}
\date{}
\begin{document}

\title{Permutation resemblance}
\author{Li-An Chen and Robert S. Coulter}
\address{Department of Mathematical Sciences\\
University of Delaware\\
Newark DE 19716, USA}

\begin{abstract}
Motivated by the problem of constructing bijective maps with low
differential uniformity, we introduce the notion of {\em permutation
resemblance} of a function, which looks to measure the distance a given map is
from being a permutation.
We prove several results concerning permutation resemblance and show how it
can be used to produce low differentially uniform bijections.
We also study the permutation resemblance of planar functions, which over
fields of odd characteristic are known not to be bijections and to have the
optimal differential uniformity.
\end{abstract}

\maketitle

\section{Introduction}

Throughout this paper $\grp{G}$ denotes a finite group of order $q$, written additively but
not necessarily abelian, and $\grp{G}^{\star}=\grp{G}\setminus\{0\}$. The finite field of order $q$ is denoted by
$\ff{q}$ and $\ffs{q} = \ff{q}\setminus\{0\}$. For a finite set $\set{S}$,
$\# \set{S}$ denotes the cardinality of $\set{S}$. 

Let $f:\grp{G}\to \grp{G}$.
The set of distinct images of $f$ is denoted by
$\im{f}:=\{f(x)\,:\, x\in\grp{G}\}$, and we write
$V(f):=\# \im{f}$.
If $S\subseteq\grp{G}$, then the set of images of $S$ under $f$ is denoted by
$f(S):=\{f(x)\,:\, x\in S\}$.
For $b\in \grp{G}$, the set of preimages of $b$ under $f$ is denoted by
$f^{-1}(b):=\{x\in \grp{G}\,:\, f(x)=b\}$, and we define the
\textit{uniformity of $f$} by \[u(f):=\max_{b\in \grp{G}} \#f^{-1}(b).\]

We call $f$ a \textit{permutation} over $\grp{G}$ if
$\im{f}=\grp{G}$ and we use $\Omega_{\grp{G}}$ to denote the set of all
permutation over $\grp{G}$.
As is well known, the set $\Omega_{\grp{G}}$ is a group under composition of functions.
Over a finite field $\ff{q}$, since every function can be uniquely represented
by a polynomial in $\ffx{q}$ of degree at most $q-1$, any polynomial that
represents a permutation over $\ff{q}$ under evaluation is called a
\textit{permutation polynomial (PP)} and we will often use this term instead
when working with $\ff{q}$. 

In this paper, we are interested in measuring how far a non-permutation
function $f$ over $\grp{G}$ is from being a permutation.
A standard measure is to consider $V(f)$, with the higher the value, the
closer the function is to being a permutation.
Here, we introduce a new notion for measuring this ``distance'', called
permutation resemblance.
Firstly, let $f$ and $h$ be two functions defined on $\grp{G}$.
The \textit{resemblance of $f$ to $h$} is defined by
\begin{equation*}
\res(f,h)=V(f-h). 
\end{equation*}
Clearly, $\res(f,f)=\# \{0\}=1$ for any function $f$.
Also, $\res(f,h)=\res(h,f)$, and $\res(f,h+c)=\res(f,h)$ for any constant
$c\in \grp{G}$. 
We may now introduce the central idea of this article.
\begin{defn} \label{presdefn}
For $f:\grp{G}\to \grp{G}$, we define the \textit{permutation resemblance of $f$} by
\begin{equation*}
\pres(f)=\min\{\res(f,h)\,:\, h\in\Omega_{\grp{G}}\}, 
\end{equation*}
or equivalently, by writing $f-h=-g$, 
\begin{equation*}
\pres(f)=\min\{V(g)\,:\, g+f\in\Omega_{\grp{G}}\}. 
\end{equation*}
\end{defn}
Intuitively, permutation resemblance measures the smallest number of different
shifts required to alter a function so that it becomes a permutation.
Put another way, one can view $\pres(f)$ as the minimum value of $V(g)$,
where $g$ runs through all functions over $\grp{G}$ for which
$g+f\in \Omega_{\grp{G}}$.
The higher the $\pres(f)$, the further $f$ is away from being a
permutation.
In particular, $f$ is a permutation if and only if $\pres(f)=\res(f,f)=1$.
On the other hand, $f$ is a constant if and only if $\pres(f)=\# \grp{G}$.
It is important to note that this measure is distinct from the common measure
already mentioned above. While there is a clear relation (even in the
definition), permutation resemblance seems to provide a more nuanced measure than
$V(f)$.
This can be seen, for instance, in our main result which in part shows how the
upper bound for permutation resemblance is inverse proportional to $V(f)$.
We note that, in general, for a given $f$, there is more than one $g$ for which
$g+f$ is a permutation and $V(g)=\pres(f)$.
We are aware of two papers -- \cite{carlet07} and \cite{panario11} -- where
similar notions were introduced. At the end of the paper we will discuss the
inequivalence of permutation resemblance to these earlier notions.

Non-trivial lower and upper bounds can be established for $\pres$, and these
bounds are non-trivial as soon as we move away from the two extremal
(and uninteresting) cases of constant functions and permutations.
In Section \ref{sc:lb_ub_pres} we establish such bounds.
We also prove that the two bounds are the same if and only if $f$ is a permutation, or its preimage distribution has a certain format. When $u(f)\neq q-V(f)+1$, the lower bound is still tight, and we give two classes of functions where the lower bound of Theorem \ref{th:lb_ub_pres} is met.
In Section \ref{sc:pres_inv}, we consider some widely used equivalence relations
on functions and how $\pres$ behaves under these relations.
In Section \ref{sc:pres_planar}, we consider the relation between
permutation resemblance and differential uniformity, which measures a function's
resistance to differential cryptanalysis when used as an S-box.
The problem of constructing permutation polynomials with low DU
has driven much research in information security in recent years
and our main motivation for introducing permutation resemblance is to provide
a new tool for approaching this problem.
The main reasons for why we believe $\pres$ is an important concept are
illustrated in this section. In particular, we point to
Theorem \ref{th:diffuni_bound}, which shows how one can change a
non-permutation into a
permutation while bounding how much differential uniformity increases.
The final section of the paper connects the upper bound in Theorem \ref{th:lb_ub_pres} with the values
\begin{equation}\label{eq:def_ns}
N_s(f)=\#\{(x_1,\dots,x_s)\in \grp{G}^{s}\,:\, f(x_i)=f(x_j),\,\forall 1\le i,j\le s; x_i= x_j\leftrightarrow i=j\}, 
\end{equation}
with $f:\grp{G}\to\grp{G}$ and $2\le s\le u(f)$.
There we show that $\#(\grp{G}\setminus\im{f})=q-V(f)$ can be written as a sum of $N_s(f)$. These numbers are known to be closely related to $V(f)$. In \cite{coulter14c}, Coulter and Senger gave a lower bound of $V(f)$ in terms of $N_s(f)$ for any given $2\le s\le u(f)$. Moreover, the case when $s=2$ is particularly useful for bounding $V(f)$ and other relavant quantities. One early example was given by Carlitz \cite{carlitz55} in 1955, where he showed a lower bound of $V(f)$ on average by writing $N_2(f)+q$ as a character sum. For more recent applications of $N_2(f)$ and $N_2(f)+q$, see \cite{coulter11,carlet21,kolsch22} for example.
Finally, we close the paper by showing that $\pres$ is not equivalent to other notions defined in \cite{carlet07} and \cite{panario11} that concern the bijectivity of the difference operators of functions over finite groups. 

\section{Establishing Lower and Upper Bounds for $\pres$}\label{sc:lb_ub_pres}

Our first result on $\pres$ establishes non-trivial lower and upper bounds
on $\pres$. 
\begin{theorem}\label{th:lb_ub_pres}
If $f:\grp{G}\to\grp{G}$, then 
\begin{equation}\label{eq:lb_ub_pres}
u(f)\le \pres(f)\le q-V(f)+1. 
\end{equation}
\end{theorem}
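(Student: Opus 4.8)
The plan is to establish the two inequalities separately, both by fairly direct counting arguments.

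For the lower bound $u(f)\le\pres(f)$, suppose $h\in\Omega_{\grp{G}}$ is a permutation achieving $\pres(f)=\res(f,h)=V(f-h)$, and set $g=h-f$, so that $h=f+g$ and $V(g)=\pres(f)$. Pick $b\in\grp{G}$ with $\#f^{-1}(b)=u(f)$, and let $A=f^{-1}(b)$. For each $x\in A$ we have $h(x)=b+g(x)$, and since $h$ is a bijection the values $h(x)$ for $x\in A$ are pairwise distinct; hence the values $g(x)=h(x)-b$ for $x\in A$ are pairwise distinct as well. This gives $V(g)\ge\#A=u(f)$, which is the claimed lower bound. (Note this argument does not even use surjectivity of $h$, only injectivity, which is automatic on a finite set anyway.)

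For the upper bound $\pres(f)\le q-V(f)+1$, the idea is to construct an explicit $g$ with $f+g\in\Omega_{\grp{G}}$ and $V(g)\le q-V(f)+1$. Write $\im{f}=\{y_1,\dots,y_m\}$ with $m=V(f)$, and let $M=\grp{G}\setminus\im{f}$, so $\#M=q-m$. For each image value $y_i$, choose one distinguished preimage $x_i\in f^{-1}(y_i)$; the remaining $q-m$ elements of $\grp{G}$ (namely $\grp{G}\setminus\{x_1,\dots,x_m\}$) are exactly the "excess" preimages. Define $g$ to be zero on all of $\{x_1,\dots,x_m\}$, and on the $q-m$ excess elements use a bijection from that set onto $M$ shifted appropriately: concretely, fix any bijection $\sigma$ from $\grp{G}\setminus\{x_1,\dots,x_m\}$ onto $M$ and set $g(x)=\sigma(x)-f(x)$ for such $x$. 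Then $f+g$ equals $f$ on $\{x_1,\dots,x_m\}$ (hitting each $y_i$ once) and equals $\sigma$ on the complement (hitting each element of $M$ once), so $f+g$ is a bijection. The image of $g$ is contained in $\{0\}\cup\{\sigma(x)-f(x) : x\in\grp{G}\setminus\{x_1,\dots,x_m\}\}$, a set of size at most $1+(q-m)=q-V(f)+1$, giving the bound.

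The main obstacle is purely bookkeeping in the upper bound construction: one must verify that $g$ is well-defined (the two pieces of the definition agree on any overlap — here there is none, since $\{x_1,\dots,x_m\}$ and its complement are disjoint) and that $f+g$ is genuinely a bijection, which amounts to checking the images of the two pieces partition $\grp{G}$ as $\im{f}$ together with $M$. Both are immediate from the setup, so no real difficulty arises; the only subtlety worth a remark is that $0$ may or may not already appear among the values $\sigma(x)-f(x)$, which is why the bound is stated with a "$+1$" rather than exactly $q-V(f)$, and this slack is precisely what later results analyze.
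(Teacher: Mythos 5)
Your proof is correct and follows essentially the same route as the paper: the lower bound via injectivity of $g$ on a largest fibre of $f$ (the paper's Lemma~\ref{lm:g_1to1_on_preimg}), and the upper bound via the identical construction (send one representative of each fibre to $0$ and pair the remaining domain elements with $\grp{G}\setminus\im{f}$ via $g(x)=\sigma(x)-f(x)$). The only nitpick is that you write the permutation as $f+g$ whereas the paper's definition of $\pres$ uses $g+f$, and since $\grp{G}$ is not assumed abelian the order matters formally; your $g$ works verbatim in the paper's order, since $g(x)+f(x)=\sigma(x)$ on the excess elements and $g(x)+f(x)=f(x)$ on the chosen representatives.
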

The lower bound is based on the following observation. 
\begin{lemma}\label{lm:g_1to1_on_preimg}
Let $f:\grp{G}\to\grp{G}$. If $g:\grp{G}\to\grp{G}$ satisfies that $g+f$ is a permutation, then $g$ is injective on each preimage set of $f$, i.e., for all $b\in\im{f}$, if $x,y\in f^{-1}(b)$ and $x\neq y$, then $g(x)\neq g(y)$. 
\end{lemma}
\begin{proof}
Let $f,g:\grp{G}\to\grp{G}$ such that $g+f$ is a permutation. If for some $b\in \im{f}$, $x,y\in f^{-1}(b)$ and $g(x)=g(y)$, then $(g+f)(x)=g(x)+f(x)=g(x)+b=g(y)+f(y)=(g+f)(y)$. Since $g+f$ is a permutation, this implies that $x=y$. 
\end{proof}

Let $f:\grp{G}\to\grp{G}$ and recall that $u(f)=\max\{\#f^{-1}(b)\,:\, b\in\grp{G}\}$. To prove the lower bound in Theorem \ref{th:lb_ub_pres}, take an element $b\in\im{f}$ such that $\#f^{-1}(b)=u(f)$. If $g:\grp{G}\to\grp{G}$ such that $g+f$ is a permutation, then by Lemma \ref{lm:g_1to1_on_preimg}, $g$ is injective on $f^{-1}(b)$, which means $g$ has at least $u(f)$ distinct images. Therefore, $u(f)\le \pres(f)$ since $\pres(f)=\min\{V(g)\,:\, g+f\text{ is a permutation}\}$. 

To prove the upper bound, observe that we can always construct a $g$ that makes $g+f\in\Omega_{\grp{G}}$ as follows. First, let $0\in\im{g}$, and let $g$ map exactly one element from each preimage set $f^{-1}(b)$ for $b\in\im{f}$ to $0$. This makes sure that $\im{f}\subseteq\im{g+f}$. Now both the domain and the codomain have $\#\left(\grp{G}\setminus\im{f}\right)=q-V(f)$ elements left unassigned, so we can pair up each unassigned $x$ of the domain and unassigned $y$ of the codomain and define $g(x)=y-f(x)$. Thus, $\im{g+f}=\grp{G}$, and $V(g)$ is at worst $q-V(f)+1$ if all $y-f(x)$ happens to be distinct. Hence, $\pres(f)\le q-V(f)+1$. This completes the proof of Theorem \ref{th:lb_ub_pres}. 

The lower bound in Theorem \ref{th:lb_ub_pres} may coincide with the upper bound for some $f$, and thus $\pres(f)=u(f)$ when that happens. In the following, we prove that this is precisely when a function is either a permutation, or is mapping everything one-to-one except for exactly one element that has more than one preimage. 
\begin{theorem}\label{th:LB=UB}
Let $f:\grp{G}\to\grp{G}$. Then $u(f)=q-V(f)+1$ if and only if either $f$ is a permutation, or there exists a unique $y\in\im{f}$ for which $\#f^{-1}(y)=u(f)>1$ and $\#f^{-1}(b)=1$ for all $b\in \im{f}\setminus\{y\}$. 
\end{theorem}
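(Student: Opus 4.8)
The plan is to argue entirely through the multiset of fibre sizes. Write $m_b=\#f^{-1}(b)$ for each $b\in\im{f}$. Since the sets $f^{-1}(b)$, $b\in\im{f}$, partition $\grp{G}$ into exactly $V(f)$ nonempty blocks, we have $\sum_{b\in\im{f}}m_b=q$, and therefore the pivotal identity
\[
\sum_{b\in\im{f}}\bigl(m_b-1\bigr)=q-V(f),
\]
while by definition $u(f)=\max_{b\in\im{f}}m_b$. Everything will follow from this identity together with the fact that every summand $m_b-1$ is nonnegative.

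For the forward direction, suppose $u(f)=q-V(f)+1$. Choose $y\in\im{f}$ with $m_y=u(f)$. Substituting into the identity gives $m_y-1=\sum_{b\in\im{f}}(m_b-1)$, hence $\sum_{b\in\im{f}\setminus\{y\}}(m_b-1)=0$; as each term is nonnegative, $m_b=1$ for every $b\in\im{f}\setminus\{y\}$. If $u(f)=1$ then in fact every $m_b=1$, so $V(f)=q$ and $f$ is a permutation. If instead $u(f)>1$, then $m_y=u(f)>1$ while all other fibres are singletons, and such a $y$ is forced to be unique, since any second fibre of size $u(f)>1$ would contradict $m_b=1$.

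For the converse I would simply evaluate $q-V(f)+1$ in each of the two listed cases. If $f$ is a permutation, then $u(f)=1$ and $V(f)=q$, so $q-V(f)+1=1=u(f)$. If there is a unique $y\in\im{f}$ with $\#f^{-1}(y)=u(f)>1$ and $\#f^{-1}(b)=1$ for all $b\in\im{f}\setminus\{y\}$, then counting the $V(f)$ blocks gives $q=\sum_{b\in\im{f}}m_b=u(f)+\bigl(V(f)-1\bigr)$, i.e.\ $u(f)=q-V(f)+1$, as required.

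I do not expect a genuine obstacle here: the entire content is the bookkeeping identity $\sum_{b}(m_b-1)=q-V(f)$ plus nonnegativity of the summands. The only points that need a little care are keeping the two subcases of the forward direction separate (the permutation case versus the single heavy-fibre case) and handling the degenerate boundary $u(f)=1$ cleanly within that argument rather than as an afterthought.
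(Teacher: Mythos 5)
Your proof is correct and follows essentially the same route as the paper's: both express the excess $q-V(f)$ as a sum of nonnegative fibre-size excesses and use nonnegativity to force every fibre other than the heaviest to be a singleton. The only difference is cosmetic — you sum $m_b-1$ directly over image points, while the paper organizes the same count through the preimage distribution $M_r(f)$ (which it reuses later); your handling of the $u(f)=1$ case inside the forward direction is also fine.
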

\begin{proof}
Clearly, if $f$ is a permutation, then $u(f)=1$ and $q-V(f)=0$ so the assertion is true in this case. We are interested in the case when $f$ is not a permutation, so we may assume that $u(f)>1$ and $q-V(f)>0$. The backward direction is easy to verify. Assume that exactly one element $y\in\im{f}$ has more than one preimage, and everything else from $\im{f}\setminus\{y\}$ has exactly one preimage. Then $u(f)=\#f^{-1}(y)$ by definition, and thus $q=u(f)+(V(f)-1)$ by the fact that the disjoint union of all preimage sets is the domain. 

To prove the forward direction, consider the preimage distribution of $f$, which is defined by the ordered $(u(f)+1)$-tuple $(M_0(f),M_1(f),\dots,M_{u(f)}(f))$, where
\begin{equation*}
M_r(f)=\#\{b\in \grp{G}\,:\, \#f^{-1}(b)=r\},\, \text { for }0\le r\le u(f). 
\end{equation*}
So in particular, $q-V(f)=M_0(f)>0$ by our assumption. By counting the elements of the domain and the codomain respectively, we have 
\begin{equation}\label{eq:q_sum_mr}
q=\sum_{r=1}^{u(f)}rM_r(f),\, \text{ and }q=\sum_{r=0}^{u(f)}M_r(f). 
\end{equation}
Equating both expressions of $q$ in (\ref{eq:q_sum_mr}) and moving all terms of $M_r(f)$ for $r>0$ to one side yields
\begin{equation}\label{eq:sum_m0_1}
\begin{aligned}
M_0(f)&=\sum_{r=1}^{u(f)}(r-1)M_r(f)=\sum_{r=2}^{u(f)}(r-1)M_r(f)\\
&=\sum_{r=2}^{u(f)-1}(r-1)M_r(f)+(u(f)-1)M_{u(f)}(f), 
\end{aligned}
\end{equation}
where the summation $\sum_{r=2}^{u(f)-1}$ in the last expression is considered to be empty if $u(f)=2$. If $u(f)=q-V(f)+1$, we may substitute $u(f)-1=M_0(f)$ into the last term of (\ref{eq:sum_m0_1}) and combine with $M_0(f)$ on the left-hand-side: 
\begin{equation}\label{eq:sum_m0_2}
\begin{aligned}
0&=\sum_{r=2}^{u(f)-1}(r-1)M_r(f)+M_0(f)\left(M_{u(f)}(f)-1\right). 
\end{aligned}
\end{equation}
By definition, all $M_r(f)$'s are nonnegative integers and $M_{u(f)}(f)\ge 1$. So for (\ref{eq:sum_m0_2}) to be true, we must have every term in the summation being $0$. In particular, $M_0(f)(M_{u(f)}(f)-1)=0$, which implies $M_{u(f)}(f)=1$. Moreover, if $u(f)>2$, then $\sum_{r=2}^{u(f)-1}(r-1)M_r(f)=0$ implies that $M_r(f)=0$ for all $2\le r\le u(f)-1$. Therefore, $f$ has the preimage distribution as claimed. 
\end{proof}

The upper bound in Theorem \ref{th:lb_ub_pres} is a worst case estimation, so we expect in most cases $\pres(f)$ to be lower. However, it helps us to relate $\pres(f)$ with $N_s(f)$ as we will show in Section \ref{sc:pres_ns}, where (\ref{eq:q_sum_mr}) will be used again as it is true for any $f:\grp{G}\to\grp{G}$ in general. 

As we will shortly show, the lower bound in Theorem \ref{th:lb_ub_pres}
can be achieved even when $u(f)\neq q-V(f)+1$. Generally, however, $\pres(f)\neq u(f)$.
A simple example comes from the polynomial $f(x)=x^2(1-x)$ over $\ff{5}$.
Here $u(f)=2$ as there are 2 roots and $f(-1)=f(-2)=2$ (while $f(2)=1$).
If $\pres(f)$ were $2$, then there would exist distinct $a, b\in\ff{5}$
satisfying $\ff{5}=\{0+a,0+b,1+a,2+a,2+b\}$. If we now shift each element by
$-a$, we obtain $\ff{5}=\{0,b-a,1,2,2+b-a\}$, implying
$\{b-a,2+b-a\}=\{3,4\}$.
However, $(b-a)-(2+b-a)=-2\notin \{\pm(3-4)\}=\{\pm 1\}$ so this is impossible. 

We now evaluate $\pres(f)$ for two classes of polynomials over finite fields that are not the type of functions described in Theorem \ref{th:LB=UB} (so the lower and upper bounds in Theorem \ref{th:lb_ub_pres} are different). 
Each provides examples where the lower bound of Theorem \ref{th:lb_ub_pres} is met.
We shall prove an upper bound of $\pres(f)$ by finding a suitable $g\in\ffx{q}$ such that $g+f$ is a permutation. We may always assume that $g(0)=0$ since if $g+f$ is a permutation, then so is $g+f-g(0)$, and $V(g(x))=V(g(x)-g(0))$. 

The polynomials we shall consider are the \textit{$p$-polynomials} over $\ff{p^e}$. These are a class of polynomials of the form $\sum_{i=0}^{e-1} a_i x^{p^i}$ for $a_1,\dots,a_{e-1}\in\ff{p^e}$, and can be used to represent all linear
operators on $(\ff{p^e},+)$ when viewed as a vector space over $\ff{p}$. 
\begin{theorem}\label{th:pres_ppoly}
If $L\in\ffx{p^{e}}$ is a $p$-polynomial, then $\pres(L)=u(L)$. 
\end{theorem}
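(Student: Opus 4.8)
The plan is to prove $\pres(L)\le u(L)$, since the reverse inequality is immediate from Theorem~\ref{th:lb_ub_pres}. A $p$-polynomial $L$ is an $\ff p$-linear map on the vector space $\ff{p^e}$, so its kernel $K=L^{-1}(0)$ is an $\ff p$-subspace, say of dimension $d$, and $u(L)=\#K=p^d$; moreover every nonempty preimage set $L^{-1}(b)$ is a coset of $K$ and hence also has size $p^d$, while $\im L$ is the $\ff p$-subspace $L(\ff{p^e})$ of dimension $e-d$, so $V(L)=p^{e-d}$. The goal is to exhibit an explicit $g$ with $g+L\in\Omega_{\ff{p^e}}$ and $V(g)=p^d=u(L)$.

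The natural candidate is to take $g$ itself to be (a small perturbation forcing bijectivity built from) a linear map. First I would pick a complement: write $\ff{p^e}=K\oplus W$ as $\ff p$-vector spaces, where $\dim W=e-d$, so that $L$ restricted to $W$ is an isomorphism onto $\im L$. The idea is to choose an $\ff p$-linear $g_0$ that is zero on $W$ and an isomorphism from $K$ onto a complement of $\im L$; then $g_0+L$ sends $W$ isomorphically onto $\im L$ and $K$ isomorphically onto the complement, and since these images are complementary subspaces whose dimensions sum to $e$, the map $g_0+L$ is a bijection of $\ff{p^e}$. Such a $g_0$ is a $p$-polynomial, so $g_0+L$ is a PP, and $V(g_0)=\#g_0(\ff{p^e})=\#g_0(K)=p^d=u(L)$, because $g_0$ kills $W$ and is injective on $K$. (One must check a complement of $\im L$ of the right dimension exists — it does, being a vector space statement — and that an $\ff p$-linear map is realizable as a $p$-polynomial, which is the standard correspondence recalled just before the theorem.) I would then note $g_0(0)=0$ as desired and conclude $\pres(L)\le V(g_0)=u(L)$.

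The main subtlety — really the only place care is needed — is verifying that $g_0+L$ is genuinely a bijection, i.e. that its image is all of $\ff{p^e}$. Concretely: any $v\in\ff{p^e}$ decomposes (now using $\ff{p^e}=\im L\oplus C$ with $C$ the chosen complement) as $v=u+c$; one solves $L(w)=u$ uniquely for $w\in W$ and $g_0(k)=c$ uniquely for $k\in K$, and then $(g_0+L)(w+k)=L(w)+g_0(k)=u+c=v$ using that $g_0$ vanishes on $W$ and $L$ vanishes on $K$. Injectivity follows by the same token from a dimension count, or directly since the kernel of $g_0+L$ would have to lie in $K\cap W=\{0\}$ after splitting. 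Everything else is the routine translation between $\ff p$-linear maps and $p$-polynomials and the coset description of preimage sets; I would keep those steps brief.
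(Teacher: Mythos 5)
Your proposal is correct and follows essentially the same route as the paper: the lower bound comes from Theorem~\ref{th:lb_ub_pres}, and the upper bound from an explicit $g$ with $V(g)=u(L)$ whose images are a full set of coset representatives of $\im{L}$, so that $g+L$ covers every coset. Your $g$ (linear, vanishing on a complement $W$ of the kernel and carrying the kernel onto a complement of $\im{L}$) is just a linear instance of the paper's construction, which takes $g$ constant on transversals $P_i$ of the preimage sets with values $c_i$ running over coset representatives; the only extra dividend of your version is the observation that $g$ can itself be chosen to be a $p$-polynomial.
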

\begin{proof}
Let $L$ be a $p$-polynomial over $\ff{p^{e}}$. Since $L$ is a linear transformation from $\ff{p^{e}}$ to itself, the kernel $L^{-1}(0)$ and the image set $\im{L}$ of $L$ are subspaces of $\ff{p^{e}}$ over $\ff{p}$, say $\#L^{-1}(0)=p^{e-k}$ and $V(L)=p^k$ for some $0\le k\le e$. If $b\in\im{L}$ and $L(x_0)=b$ for some $x_0\in\ff{p^{e}}$, then $L(x)=b$ if and only if $L(x-x_0)=0$. So $L(x)=b$ has the same number of solutions as $L(x)=0$. Hence, $u(L)=p^{e-k}\le \pres(L)$ by Theorem \ref{th:lb_ub_pres}. 

To prove the other side of the inequality, we shall construct a polynomial $g\in\ffx{p^e}$ such that $g+L$ is a PP and $V(g)=p^{e-k}$. Since $\im{L}$ is a subspace of $\ff{p^{e}}$ with $p^k$ elements, it is an additive subgroup of $\ff{p^{e}}$ with $p^{e-k}$ additive cosets in $\ff{p^{e}}$. Let $C_1,\dots,C_{p^{e-k}}$ be the cosets of $\im{L}$, and $c_{i}$ be a representative of $C_i$ for $1\le i\le p^{e-k}$. Since every element of $\im{L}$ has $p^{e-k}$ distinct preimages, we may partition $\ff{p^{e}}$ into $p^{e-k}$ disjoint subsets $P_1,\dots,P_{p^{e-k}}$, each of cardinality $p^k$, containing exactly one element from each of $f^{-1}(b)$ for all $b\in\im{L}$. In this way, $L$ maps each $P_i$ one-to-one onto $\im{L}$. Consider a function $g\in\ffx{p^{e}}$ that maps $P_i$ constantly to $c_i$ for all $1\le i\le p^{e-k}$. Then $(g+L)(P_i)=c_i+\im{L}=C_i$ for $1\le i\le p^{e-k}$. Therefore, $\im{g+L}=\ff{p^{e}}$ and $g+L$ is a PP. Since $V(g)=p^{e-k}$, we conclude that $\pres(L)=p^{e-k}$. 
\end{proof}

Our second class of polynomials we shall determine $\pres$ for are those
connected to the quadratic character of an odd prime field. These also sometimes meet the lower bound of Theorem \ref{th:lb_ub_pres}. 
\begin{theorem}
Let $f(x)=x^{(p-1)/2}\in\ffx{p}$, the quadratic character of an odd prime field. Then
\begin{equation}\label{eq:pres_quad_char}
\pres(f)=
\begin{cases}
u(f)+1=\frac{p+1}{2},&\text{ if }p\equiv 1\pmod 4;\\
u(f)=\frac{p-1}{2},&\text{ if }p\equiv 3\pmod 4.
\end{cases}
\end{equation}
\end{theorem}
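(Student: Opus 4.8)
The plan is to reduce the computation of $\pres(f)$ to an elementary counting problem modulo $p$. By Euler's criterion $f(0)=0$, while $f$ takes the value $1$ on the set $R$ of nonzero squares and the value $-1$ on the set $N$ of nonsquares; hence $V(f)=3$ and $u(f)=(p-1)/2$, so Theorem~\ref{th:lb_ub_pres} gives $(p-1)/2\le\pres(f)\le p-2$. Since $f$ is not one of the maps appearing in Theorem~\ref{th:LB=UB}, these bounds are strict, so the exact value has to be found directly.

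The first step is the following dictionary. If $g+f$ is a PP we may assume $g(0)=0$; put $A=(g+f)(R)$ and $B=(g+f)(N)$. Because $g+f$ is a bijection fixing $0$, the sets $\{0\},A,B$ partition $\ff{p}$, so $B=\ffs{p}\setminus A$ and $\#A=\#B=(p-1)/2$; moreover $g=(g+f)-1$ on $R$ and $g=(g+f)+1$ on $N$, whence $\im{g}=\{0\}\cup(A-1)\cup(B+1)$. Conversely every $(p-1)/2$-subset $A\subseteq\ffs{p}$ (with $B:=\ffs{p}\setminus A$) is realized by such a $g$, built from arbitrary bijections $R\to A$ and $N\to B$, and then
\begin{equation*}
V(g)=\#\bigl(\{0\}\cup(A-1)\cup(B+1)\bigr).
\end{equation*}
So $\pres(f)$ is the minimum of the right-hand side over all $(p-1)/2$-subsets $A$ of $\ffs{p}$. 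As $\#(A-1)=(p-1)/2$, this minimum is at least $(p-1)/2$, with equality exactly when $B+1\subseteq A-1$ and $0\in A-1$; by cardinality the first condition forces $B+1=A-1$, i.e.\ $B=A-2$, and the second says $1\in A$. Everything thus comes down to deciding, as a function of $p$, whether there is a $(p-1)/2$-subset $A\subseteq\ffs{p}$ with $A\sqcup(A-2)=\ffs{p}$ and $1\in A$.

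The condition $A\sqcup(A-2)=\ffs{p}$ says precisely that $2\notin A$ and that $A$ contains no pair $\{z,z+2\}$, while $\#A=(p-1)/2$. Viewing the relation $z\sim z+2$ as a graph on $\ff{p}$ --- a single $p$-cycle, since $\gcd(2,p)=1$ --- this makes $A$ a maximum independent set of an odd cycle that avoids the two adjacent vertices $0$ and $2$. The key point is that such a set is unique: removing two adjacent vertices from the $p$-cycle leaves a path on $p-2$ vertices, and since $p-2$ is odd that path has a unique maximum independent set; traversing the cycle in the order $0,2,4,\dots$ one finds $A$ must equal $A_0:=\{4m\bmod p\,:\,1\le m\le(p-1)/2\}$. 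It then remains to locate $1$: in the listing $u_i=2i\bmod p$ the vertex $1$ appears at index $(p+1)/2$, so $1\in A_0$ exactly when $(p+1)/2$ is even, i.e.\ when $p\equiv 3\pmod 4$. Therefore, if $p\equiv 3\pmod 4$ the set $A_0$ satisfies $1\in A_0$, the lower bound is met, and $\pres(f)=u(f)=(p-1)/2$; while if $p\equiv 1\pmod 4$ no admissible $A$ meets it, so $\pres(f)\ge u(f)+1$, and the same $A_0$ --- for which now $1\notin A_0$ but still $2\notin A_0$ and $\ffs{p}\setminus A_0=A_0-2$ --- yields a $g$ with $g+f$ a PP and $V(g)=\#\bigl(\{0\}\cup(A_0-1)\bigr)=(p+1)/2$, giving $\pres(f)=u(f)+1=(p+1)/2$.

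The translation between $g$ and the pair $(A,B)$, and the bookkeeping with the shifted sets $A-1$ and $B+1$, are routine. The one real obstacle is the structural fact that a maximum independent set of an odd cycle from which a prescribed adjacent pair of vertices has been removed is unique and ``alternating'', together with the care needed to locate the residue $1$ relative to that set and so extract the dichotomy modulo $4$; this is the step I expect to demand the most attention.
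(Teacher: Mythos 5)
Your proof is correct, but it follows a genuinely different route from the paper's. You first translate the problem completely: normalizing $g(0)=0$ and setting $A=(g+f)(R)$, $B=\ffs{p}\setminus A$, you show $\im{g}=\{0\}\cup(A-1)\cup(B+1)$ and, conversely, that every $(p-1)/2$-subset $A\subseteq\ffs{p}$ arises this way, so $\pres(f)=\min_A \#\bigl(\{0\}\cup(A-1)\cup(B+1)\bigr)$; the equality case $V(g)=(p-1)/2$ then forces $B=A-2$ and $1\in A$, i.e.\ $A\sqcup(A-2)=\ffs{p}$ with $1\in A$, which you solve by observing that $A$ must be the unique maximum independent set of the $p$-cycle $z\sim z+2$ avoiding the adjacent vertices $0,2$, namely $A_0=\{4m \bmod p: 1\le m\le (p-1)/2\}$, and $1\in A_0$ exactly when $p\equiv 3\pmod 4$. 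The paper instead exhibits an explicit set $S$ (which, in your notation, is exactly $A_0-1$) and verifies directly that the resulting $g+f$ is a PP when $p\equiv 3\pmod 4$, proves impossibility of $V(g)=(p-1)/2$ when $p\equiv 1\pmod 4$ by lifting the partition $\{y_k\pm1\}$ of $\ffs{p}$ to $\mathbb{Z}$ and deriving a parity contradiction from the sum $2\sum y_k=p(p-3)/2$, and then gives a second ad hoc construction achieving $(p+1)/2$; your argument replaces both the parity computation and the second construction by the uniqueness of $A_0$ (reusing $A_0$ itself for the $(p+1)/2$ upper bound), and in addition shows the optimal image structure is unique, which the paper's method does not reveal. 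Two small polish points: the claim that a path on an odd number of vertices has a unique maximum independent set is the load-bearing structural fact and deserves its one-line proof (pair consecutive vertices from one end and propagate the forced alternation); and the remark that, because $f$ is not of the shape in Theorem~\ref{th:LB=UB}, ``these bounds are strict'' is loosely phrased --- the theorem only says the lower and upper bounds of Theorem~\ref{th:lb_ub_pres} differ, and indeed for $p\equiv 3\pmod 4$ the lower bound is attained --- but nothing in your argument relies on this.
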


\begin{proof}
The function $f(x)=x^{(p-1)/2}\in\ffx{p}$ has $\im{f}=\{0,1,-1\}$, and $\#f^{-1}(1)=\#f^{-1}(-1)=(p-1)/2$. So $u(f)=(p-1)/2\le \pres(f)$ by Theorem \ref{th:lb_ub_pres}. 

First, consider the case that $p\equiv 3\pmod 4$. To prove (\ref{eq:pres_quad_char}), we shall construct a function $g\in\ffx{p}$ such that $g+f$ is a PP and $V(g)=u(f)=(p-1)/2$. Define a subset $S\subseteq\ff{p}$ of cardinality $(p-1)/2$ by 
\[S=\{0\}\cup\{4t-1,4t\,:\, 1\le t\le (p-3)/4\}. \]
Let $g\in\ffx{p}$ be a function such that $\im{g}=S$, $g(0)=0$, and $g$ maps each of $f^{-1}(1)$ and $f^{-1}(-1)$ one-to-one onto $S$. Then $(g+f)(0)=0$ and 
\begin{equation*}
\begin{aligned}
(g+f)(f^{-1}(1))&=\{s+1\,:\, s\in S\}\\
&=\{1\}\cup\{4t,4t+1\,:\, 1\le t\le (p-3)/4\}\\
&=\{1\}\cup\{4,8,\dots,p-3\}\cup\{5,9,\dots,p-2\},
\end{aligned}
\end{equation*}
and
\begin{equation*}
\begin{aligned}
(g+f)(f^{-1}(-1))&=\{s-1\,:\, s\in S\}\\
&=\{-1\}\cup\{4t-2,4t-1\,:\, 1\le t\le (p-3)/4\}\\
&=\{p-1\}\cup\{2,6,\dots,p-5\}\cup\{3,7,\dots,p-4\}. 
\end{aligned}
\end{equation*}
Therefore, $g+f$ is a PP and $\pres(f)=(p-1)/2$. 

Next, consider the case that $p\equiv 1\pmod 4$. We shall first prove that $\pres(f)> (p-1)/2$ by contradiction, and then prove that $\pres(f)\le (p+1)/2$ by exhibiting an example of $g\in\ffx{p}$ such that $g+f$ is a PP and $V(g)=(p+1)/2$. If $\pres(f)$ were $(p-1)/2$, then there exists $g\in\ffx{p}$ such that $g+f$ is a PP, $g(0)=0$, and $\im{g}=\{y_1=0, y_2,\dots,y_{(p-1)/2}\}$ where the $y_k$'s are all distinct and can be viewed as integers in $\{0,1,2,\dots,p-1\}$. This means that $(g+f)(0)=0$, and 
\begin{equation}\label{eq:Fp_quad_1mod4_1}
\ffs{p}=(g+f)(\ffs{p})=\{y_{k}+1\,:\, k=1,\dots,(p-1)/2\}\cup\{y_{k}-1\,:\, k=1,\dots,(p-1)/2\}. 
\end{equation}

Since both sets in the union (\ref{eq:Fp_quad_1mod4_1}) are of cardinality $(p-1)/2$, they must be disjoint and partition $\ffs{p}$. Since $\{y_1\pm 1\}=\{1,p-1\}$, we must have $y_k\pm 1\notin\{0,1,p-1\}$ for $k=2,\dots,(p-1)/2$, which means 
\begin{equation}\label{eq:Fp_quad_1mod4_2}
\{y_k\pm 1\,:\, k=2,\dots,(p-1)/2\}=\{2,\dots,p-2\}, 
\end{equation} 
and 
\[\{y_2,y_3,\dots,y_{(p-1)/2}\}\cap\{0,1,2,p-2,p-1\}=\emptyset.\]
Since $0$ and $p-1$ are not in $\{y_2,y_3,\dots,y_{(p-1)/2}\}$, (\ref{eq:Fp_quad_1mod4_2}) still holds if we view the elements in $\mathbb{Z}$. Summing the elements of both sides of (\ref{eq:Fp_quad_1mod4_2}) in $\mathbb{Z}$ gives \[\sum_{k=2}^{(p-1)/2}(y_k+1)+(y_k-1)=2\sum_{k=2}^{(p-1)/2}y_k=2+3+\dots+(p-2)=\frac{p(p-3)}{2}.\]
This contradicts the fact that $p(p-3)/2$ is odd when $p\equiv 1\mod 4$. Hence, $\pres(f)>(p-1)/2$. 

To prove that $\pres(f)\le (p+1)/2$, let $g\in\ffx{p}$ such that $g(0)=0$ and 
\begin{equation*}
g(f^{-1}(1))=\{0\}\cup\left\{4t-1,4t\,:\, 1\le t\le (p-5)/4\right\}\cup\{p-3\},
\end{equation*}
\begin{equation*}
g(f^{-1}(-1))=\{0\}\cup\left\{4t-1,4t\,:\, 1\le t\le (p-5)/4\right\}\cup\{p-2\}. 
\end{equation*}
So $V(g)=(p+1)/2$. The images of $g+f$ are $(g+f)(0)=0$,  
\begin{equation*}
\begin{aligned}
(g+f)(f^{-1}(1))&=\{1\}\cup\{4t,4t+1\,:\, 1\le t\le (p-5)/4\}\cup\{p-2\}\\
&=\{1\}\cup\{4,8,\dots,p-5\}\cup\{5,9,\dots,p-4\}\cup\{p-2\},
\end{aligned}
\end{equation*}
and
\begin{equation*}
\begin{aligned}
(g+f)(f^{-1}(-1))&=\{-1\}\cup\{4t-2,4t-1\,:\, 1\le t\le (p-5)/4\}\cup\{p-3\}\\
&=\{p-1\}\cup\{2,6,\dots,p-7\}\cup\{3,7,\dots,p-6\}\cup\{p-3\}. 
\end{aligned}
\end{equation*}
Hence, $g+f$ is a PP and $\pres(f)\le(p+1)/2$. Thus, we conclude that $\pres(f)=(p+1)/2$. 
\end{proof}
\section{Invariants of Permutation Resemblance}\label{sc:pres_inv}

We now move to consider potential invariants of $\pres$.
We begin by recalling some of the more widely known relations
defined on functions.

Let $f,g:\grp{G}\to\grp{G}$.
\begin{itemize}
\item $f$ is \textit{right equivalent} to $g$ if
there exists some $\varphi\in\Omega_{\grp{G}}$ such that $f\circ\varphi=g$.
\item Similarly, we say $f$ is \textit{left equivalent} to $g$ if there exists
some $\varphi\in\Omega_{\grp{G}}$ such that $\varphi\circ f=g$.
\end{itemize}
Now let $f,g\in\ffx{q}$.
\begin{itemize}
\item $f$ and $g$ are \textit{extended affine equivalent
(EA-equivalent)} if there exist affine permutations $A_1, A_2$ and an affine
function $A_3$ such that \[g=A_2\circ f\circ A_1+A_3.\]
If $A_3=0$, then $f$ and $g$ are called \textit{affine equivalent}.
\item A more general equivalence, CCZ-equivalence, is defined by Carlet, Chaprin and Zinoviev in \cite{carlet98}
using the graph of $f\in\ffx{q}$: $G_f=\{\big(x,f(x)\big)\,:\, x\in\ff{q}\}$.
The polynomials $f,g$ are called \textit{CCZ-equivalent} if there
exists an affine permutation $A:\ff{q}^{2}\to\ff{q}^{2}$ such that $A(G_f)=G_g$.
CCZ-equivalence is a generalization of EA-equivalence; to see this, we refer
the reader to Budaghyan, Carlet and Pott \cite{budaghyan06}, where several
examples of pairs functions that are CCZ-equivalent but not EA-equivalent are
given.
\end{itemize}
We begin by first considering left and right equivalence.
\begin{lemma}\label{lm:img_righteq}
Suppose $f:\grp{G}\to\grp{G}$ and $\varphi\in\Omega_{\grp{G}}$. Then 
\begin{equation}\label{eq:img_righteq}
\begin{aligned}
\im{f}=\im{f\circ\varphi},
\end{aligned}
\end{equation}
and hence, 
\begin{equation}\label{eq:vf_righteq}
V(f)=V(f\circ\varphi). 
\end{equation}
Moreover, for any $b\in\grp{G}$, 
\[\#f^{-1}(b)=\#(f\circ\varphi)^{-1}(b).\]
\end{lemma}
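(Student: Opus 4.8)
The plan is to observe that right composition with $\varphi$ merely relabels the domain of $f$, so it can affect neither the set of values taken by $f$ nor the number of times each value is attained; everything then reduces to the fact that $\varphi$ is a bijection of the underlying set $\grp{G}$.

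First I would establish \eqref{eq:img_righteq}. By definition $\im{f\circ\varphi}=\{f(\varphi(x))\,:\,x\in\grp{G}\}$. Since $\varphi\in\Omega_{\grp{G}}$, the map $x\mapsto\varphi(x)$ is a bijection of $\grp{G}$, so $\{\varphi(x)\,:\,x\in\grp{G}\}=\grp{G}$; reindexing by $y=\varphi(x)$ gives $\im{f\circ\varphi}=\{f(y)\,:\,y\in\grp{G}\}=\im{f}$. Equation \eqref{eq:vf_righteq} is then immediate on taking cardinalities of both sides of \eqref{eq:img_righteq}.

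For the final claim, fix $b\in\grp{G}$ and note that $(f\circ\varphi)^{-1}(b)=\{x\in\grp{G}\,:\,f(\varphi(x))=b\}=\varphi^{-1}\bigl(f^{-1}(b)\bigr)$. Because $\varphi$ is a bijection of $\grp{G}$, it restricts to a bijection from $\varphi^{-1}\bigl(f^{-1}(b)\bigr)$ onto $f^{-1}(b)$, whence $\#(f\circ\varphi)^{-1}(b)=\#f^{-1}(b)$.

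There is no genuine obstacle here: the argument is pure set-theoretic bookkeeping, the only input being that $\varphi$ permutes $\grp{G}$. It is nonetheless worth recording explicitly, since it shows that not only $V(f)$ but the entire preimage distribution $(M_0(f),\dots,M_{u(f)}(f))$ — and hence $u(f)$ — is a right-equivalence invariant, which is the natural companion to the behaviour of $\pres$ under the equivalences examined in the remainder of this section.
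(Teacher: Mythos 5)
Your proposal is correct and follows essentially the same route as the paper: both use $\varphi(\grp{G})=\grp{G}$ to get $\im{f\circ\varphi}=\im{f}$ (hence \eqref{eq:vf_righteq}), and both identify $(f\circ\varphi)^{-1}(b)$ with $\varphi^{-1}\bigl(f^{-1}(b)\bigr)$ and invoke bijectivity of $\varphi$ to equate cardinalities. No gaps; the closing remark about the whole preimage distribution being a right-equivalence invariant is a harmless and accurate addendum.
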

\begin{proof}
The first assertion follows from the fact that $\varphi(\grp{G})=\grp{G}$.
We have 
\[\im{f}=f(\grp{G})=f(\varphi(\grp{G}))=(f\circ\varphi)(\grp{G})=\im{f\circ\varphi}.\]
The second assertion is because 
\begin{equation*}
\begin{aligned}
(f\circ \varphi)^{-1}(b)&=\{a\in\grp{G}\,:\, f(\varphi(a))=b\}\\
&=\{\varphi^{-1}(c)\in\grp{G}\,:\, f(c)=b\}\\
&=\varphi^{-1}\left(\{c\in\grp{G}\,:\, f(c)=b\}\right).
\end{aligned}
\end{equation*}
Since $\varphi^{-1}$ is also a permutation, we have
\begin{equation*}
\begin{aligned}
\#(f\circ \varphi)^{-1}(b)&=\#\{c\in\grp{G}\,:\, f(c)=b\}=\#f^{-1}(b).
\end{aligned}
\end{equation*}
\end{proof}
\begin{cor}\label{co:right_inv}
Let $f:\grp{G}\to\grp{G}$ and $\varphi\in\Omega_{\grp{G}}$. Then 
\[\pres(f)=\pres(f\circ\varphi),\]
i.e., $\pres$ is a right-invariant. 
\end{cor}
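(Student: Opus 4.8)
The plan is to show that composing with a permutation on the right does not change the set of functions $g$ for which $g+f$ becomes a permutation, at least up to precomposition by the same $\varphi$, and that this precomposition preserves $V$. Concretely, I would exploit the bijection $g \mapsto g\circ\varphi$ on the set of all functions $\grp{G}\to\grp{G}$.

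First I would take any $g:\grp{G}\to\grp{G}$ with $g+f\in\Omega_{\grp{G}}$ and $V(g)=\pres(f)$. Consider $g' := g\circ\varphi$. Then $g' + (f\circ\varphi) = (g+f)\circ\varphi$, which is a composition of two permutations, hence a permutation; so $g'$ is a valid competitor for $\pres(f\circ\varphi)$. Moreover, by Lemma \ref{lm:img_righteq} applied to the function $g$ and the permutation $\varphi$, we have $V(g') = V(g\circ\varphi) = V(g) = \pres(f)$. This gives $\pres(f\circ\varphi)\le \pres(f)$.

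For the reverse inequality, I would run the symmetric argument. Since $\varphi\in\Omega_{\grp{G}}$ is invertible with $\varphi^{-1}\in\Omega_{\grp{G}}$, and $f = (f\circ\varphi)\circ\varphi^{-1}$, applying the inequality just proved to the function $f\circ\varphi$ and the permutation $\varphi^{-1}$ yields $\pres\big((f\circ\varphi)\circ\varphi^{-1}\big)\le \pres(f\circ\varphi)$, i.e. $\pres(f)\le\pres(f\circ\varphi)$. Combining the two inequalities gives equality.

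I do not expect any genuine obstacle here: the result is essentially a one-line consequence of Lemma \ref{lm:img_righteq} together with the fact that $\Omega_{\grp{G}}$ is closed under composition. The only point requiring a moment's care is to make sure the correspondence $g\leftrightarrow g\circ\varphi$ is used consistently on both sides — that is, that we compose $f$ and $g$ with the \emph{same} $\varphi$ so the $\varphi$ and $\varphi^{-1}$ cancel inside $(g+f)\circ\varphi$ — and to invoke Lemma \ref{lm:img_righteq} for the value-count identity $V(g\circ\varphi)=V(g)$ rather than re-deriving it. If desired, one could phrase the whole thing as: the map $g\mapsto g\circ\varphi$ is a $V$-preserving bijection from $\{g : g+f\in\Omega_{\grp{G}}\}$ onto $\{g' : g'+f\circ\varphi\in\Omega_{\grp{G}}\}$, and hence the two minima defining $\pres(f)$ and $\pres(f\circ\varphi)$ agree.
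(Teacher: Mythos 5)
Your proof is correct and follows essentially the same route as the paper: the paper minimizes over $h\in\Omega_{\grp{G}}$ and uses Lemma \ref{lm:img_righteq} together with the fact that $h\mapsto h\circ\varphi$ permutes $\Omega_{\grp{G}}$, which is just the $h$-parametrized version of your $V$-preserving bijection $g\mapsto g\circ\varphi$. Your two-inequality phrasing with $\varphi^{-1}$ and the paper's single chain of equalities are interchangeable.
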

\begin{proof}
Let $f,h:\grp{G}\to\grp{G}$ and $\varphi\in\Omega_{\grp{G}}$. By Lemma \ref{lm:img_righteq}, $V(f-h)=V((f-h)\circ \varphi)=V(f\circ \varphi-h\circ \varphi)$. If $h$ is a permutation over $\grp{G}$, then so is $h\circ \varphi$. Hence, 
\begin{equation*}
\begin{aligned}
\pres(f)&=\min\{V(f-h)\,:\, h\in\Omega_{\grp{G}}\}\\
&=\min\{V(f\circ \varphi-h)\,:\, h\in\Omega_{\grp{G}}\}=\pres(f\circ\varphi). 
\end{aligned}
\end{equation*}
\end{proof}
For left equivalence, we still have 
\begin{equation}\label{eq:vf_lefteq}
V(f)=V(\varphi\circ f) 
\end{equation}
for arbitrary $f:\grp{G}\to\grp{G}$ and $\varphi\in\Omega_{\grp{G}}$, because 
\begin{equation*}
\begin{aligned}
\im{\varphi \circ f}=\{\varphi(f(a))\,:\, a\in\grp{G}\}=\varphi(\im{f}). 
\end{aligned}
\end{equation*}
However, the analogue of (\ref{eq:img_righteq}) is clearly not true in general, unless $f\in\Omega_{\grp{G}}$ as well.
Indeed, if $f$ is not a permutation, then for $c\in\im{f}\neq \grp{G}$, we can always define a permutation $\varphi$ such that $\varphi(c)\in \grp{G}\setminus\im{f}$ and therefore $\im{f}\neq \im{\varphi\circ f}$.
(Indeed, a precise condition for when $\im{f}= \im{\varphi\circ f}$ can be
determined, but we omit it here as it is has no further relevance to
our understanding of permutation resemblance.)
Following from this, it becomes clear that $\pres$ is not a left-invariant.
A small counterexample over $\ff{7}$ is easily constructed.
Set $f(x)=x^2\in\ffx{7}$ and $\varphi=(0)(1)(2345)(6)$ in cycle notation.
It is quickly checked that $(\varphi\circ f)(0)=0,
(\varphi\circ f)(\{1,6\})=1, (\varphi\circ f)(\{3,4\})=3,
(\varphi\circ f)(\{2,5\})=5$.
In particular, $\varphi\circ f$ fixes every element of $\{0,1,3,5\}$, and maps
$a\in \{2,4,6\}$ to $7-a$.
Let $g$ be the function on $\ff{7}$ that maps $\{0,1,3,5\}$ to $0$, and
$\{2,4,6\}$ to $1$.
Then $(\varphi\circ f +g)\{0,1,3,5\}=\{0,1,3,5\}$, and
$(\varphi\circ f +g)(\{2,4,6\})=\{2,4,6\}$, so $\varphi\circ f +g$ is a
permutation.
Hence, $\pres(\varphi\circ f)\le V(g)=2$.
However, $\pres(f)>2$ follows from Theorem \ref{th:planar_2to1_pres_ge2} below.
So $\pres(f)\neq \pres(\varphi\circ f)$. 

If $\varphi\in\Omega_{\grp{G}}$ satisfies $\varphi(x-y)=\varphi(x)-\varphi(y)$
for all $x,y\in\grp{G}$, then using a similar argument as in the proof of
Corollary \ref{co:right_inv}, we can show that $\pres(f)=\pres(\varphi\circ f)$
for $f:\grp{G}\to\grp{G}$. This suggests the following. 
\begin{theorem}\label{th:pres_affine_inv}
For $f\in\ffx{q}$ and affine permutations $A_1,A_2\in\ffx{q}$, we have
\[\pres(f)=\pres(A_1\circ f \circ A_2),\]
i.e., $\pres$ is an affine invariant. 
\end{theorem}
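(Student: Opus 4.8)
The plan is to combine the already-established right-invariance (Corollary \ref{co:right_inv}) with a proof of left-invariance under affine permutations, and then chain the two. For the right side, note that $A_1\circ f\circ A_2 = (A_1\circ f)\circ A_2$, so Corollary \ref{co:right_inv} immediately gives $\pres(A_1\circ f\circ A_2)=\pres(A_1\circ f)$. It therefore suffices to show $\pres(A_1\circ f)=\pres(f)$ for an affine permutation $A_1$. Write $A_1(x)=L(x)+c$ where $L$ is an additive (linear over $\ff{p}$) permutation of $\ff{q}$ and $c\in\ff{q}$ is a constant. Since $\pres$ is unaffected by adding a constant to the outer function — this follows from $\res(f,h)=\res(f,h+c')$ noted right after the definition of $\res$, or directly because $h\in\Omega_{\grp{G}}$ iff $h+c\in\Omega_{\grp{G}}$ — we may assume $A_1=L$ is additive.

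The heart of the argument is the remark made just before the theorem: if $\varphi\in\Omega_{\grp{G}}$ satisfies $\varphi(x-y)=\varphi(x)-\varphi(y)$ for all $x,y$, then $\pres(\varphi\circ f)=\pres(f)$. I would spell this out as the proof of Corollary \ref{co:right_inv} does on the other side. Take any $h:\ff{q}\to\ff{q}$. Then $V((\varphi\circ f)-h)$ should be compared with $V(f-h')$ for a suitable $h'$. The key identity is $\varphi\circ f - \varphi\circ h' = \varphi\circ(f-h')$, valid precisely because $\varphi$ is additive; this yields $V(\varphi\circ f-\varphi\circ h')=V(\varphi\circ(f-h'))=V(f-h')$, the last equality by \eqref{eq:vf_lefteq}. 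Letting $h$ range over $\Omega_{\grp{G}}$ is the same as letting $h=\varphi\circ h'$ range over $\Omega_{\grp{G}}$ (since $\varphi$ is a permutation, the map $h'\mapsto\varphi\circ h'$ is a bijection of $\Omega_{\grp{G}}$ onto itself), so
\[
\pres(\varphi\circ f)=\min_{h\in\Omega_{\grp{G}}} V(\varphi\circ f - h)=\min_{h'\in\Omega_{\grp{G}}} V(\varphi\circ f - \varphi\circ h')=\min_{h'\in\Omega_{\grp{G}}} V(f-h')=\pres(f).
\]
Applying this with $\varphi=L$ gives $\pres(L\circ f)=\pres(f)$, hence $\pres(A_1\circ f)=\pres(f)$ after restoring the constant $c$.

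Finally I would assemble the chain: $\pres(A_1\circ f\circ A_2)=\pres((A_1\circ f)\circ A_2)=\pres(A_1\circ f)$ by Corollary \ref{co:right_inv}, and $\pres(A_1\circ f)=\pres(f)$ by the computation above; therefore $\pres(A_1\circ f\circ A_2)=\pres(f)$. The main subtlety, and the only place care is genuinely needed, is the additivity identity $\varphi\circ(f-h')=\varphi\circ f-\varphi\circ h'$: this is exactly why an \emph{affine} outer map works while a general permutation does not (as the $\ff{7}$ counterexample in the text shows), and why one must peel off the constant term of $A_1$ before invoking the linearity hypothesis. Everything else is bookkeeping: the reindexing $h\leftrightarrow\varphi\circ h$ of $\Omega_{\grp{G}}$ and the invocation of \eqref{eq:vf_lefteq}. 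I expect no serious obstacle beyond stating these steps cleanly.
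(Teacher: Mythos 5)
Your proof is correct and rests on exactly the same ingredients as the paper's own argument: the invariance of $V$ under composition with permutations ((\ref{eq:vf_righteq}) and (\ref{eq:vf_lefteq})), the additivity of the linear part of $A_1$ to distribute it across the difference $f-h$, and the group structure of $\Omega_{\grp{G}}$ to reindex the minimum. The paper does this in one computation via the identity $A_1\circ(f-h)\circ A_2=A_1\circ f\circ A_2-L_1\circ h\circ A_2$, absorbing $A_2$ and the constant $c_1$ simultaneously, while you factor it as right-invariance (Corollary \ref{co:right_inv}) plus left-invariance for additive permutations plus a constant shift --- an organizational difference only, and indeed the route the paper itself hints at in the remark preceding the theorem.
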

\begin{proof}
Let $f,h\in\ffx{q}$ and $A_1,A_2$ be affine permutations over $\ff{q}$. First, by (\ref{eq:vf_righteq}) and \ref{eq:vf_lefteq}), we have $V(f-h)=V(A_1\circ (f-h) \circ A_2)$. Write $A_1=L_1+c_1,A_2=L_2+c_2$ for linearized permutations $L_1,L_2$ and fixed $c_1,c_2\in\ff{q}$. Then for $x\in\ff{q}$ we have
\begin{equation*}
\begin{aligned}
(A_1\circ (f-h) \circ A_2)(x)&=A_1\big(f(A_2(x))-h(A_2(x))\big)\\
&=L_1\big(f(A_2(x))\big)-L_1\big(h(A_2(x))\big)+c_1\\
&=A_1\big(f(A_2(x))\big)-L_1\big(h(A_2(x))\big).
\end{aligned}
\end{equation*}
Since $\Omega$ is a group under composition, $\Omega=\{h\in\Omega\}=\{L_1\circ h\circ A_2\,:\, h \in \Omega\}$. Therefore, 
\begin{equation*}
\begin{aligned}
\pres(f)&=\min\{V(f-h)\,:\, h\in\Omega\}\\
&=\min\{V(A_1\circ (f-h) \circ A_2)\,:\, h\in\Omega\}\\
&=\min\{V(A_1\circ f \circ A_2-h)\,:\, h\in\Omega\}\\
&=\pres(A_1\circ f \circ A_2). 
\end{aligned}
\end{equation*}
\end{proof}
In general, $\pres(f)$ is not an EA-invariant, nor CCZ-invariant.
For example, let $f(x)=x\in\ffx{p^e}$ and $g(x)=\Trsub{\ff{p^e}}{\ff{p}}(x)$
where $e>1$. Since $g=f+\sum_{i=1}^{e-1}x^{p^{i}}$, the two functions are EA-equivalent, and therefore CCZ-equivalent as well. However, we have $\pres(f)=1$ since it is a permutation, while $\pres(g)>1$ since it is not a permutation. 

\section{Permutation Resemblance and Differential Uniformity}\label{sc:pres_planar}

For a nonzero $a\in \grp{G}$, the \textit{differential operator of $f$ in the
direction of $a$} is defined by 
\begin{equation*}
\Delta_{f,a}:=f(x+a)-f(x), 
\end{equation*}
and the \textit{differential uniformity (DU) of $f$} is defined by 
\begin{equation*}
\delta_{f}:=\max_{a\in \grp{G}^\star}u(\Delta_{f,a}).  
\end{equation*}
The concept of DU was first suggested by Nyberg \cite{nyberg93}.
The lower the DU, the more resistant $f$ is to differential attacks when used as an S-box in a cryptosystem. 
The most desired functions are permutation polynomials over finite fields
with optimal (lowest possible) DU.
When $q$ is even, since $\Delta_{f,a}(x)=b$ if and only if
$\Delta_{f,a}(x+a)=b$, the number of solutions of $\Delta_{f,a}(x)=b$ for any
$(a,b)\in \ffs{q}\times\ff{q}$ is always even.
Consequently, the optimal functions are those which are $2$-DU; these are 
called \textit{almost perfect nonlinear (APN)}.
Examples of APN permutations are known. However, when $q=2^n$ with $n$ even,
only one example of an APN permutation is known -- see Browning, Dillon,
McQuistan and Wolfe \cite{bro10} -- and constructing APN functions and APN
permutations remains a central research problem in the area.
When $q$ is odd, it is possible to construct polynomials that are $1$-DU and
these are called \textit{planar}. Equivalently, a function $f$ is planar if
and only if every differential operator of $f$ is a PP.
That they exist is clear: the polynomial $x^2$ is easily seen to be planar
over any field that is not of characteristic 2.
However, for any $a\in\ffs{q}$, $f(x+a)-f(x)=0$ must have a solution, so it
is impossible for a planar function to be a permutation.
In fact, it is shown in \cite{coulter14c} that if $f$ is planar
over $\ff{q}$, then $V(f)$ is no larger than roughly $q-\sqrt{q}$.
Thus, in odd characteristic, our problem becomes that of constructing
permutations with near-optimal DU.

Our first result of this section shows that when $\grp{G}$ is abelian,
given two functions $f,g$, the DU of the
function $g+f$ is controlled by the DU of $f$ and $V(g)$. 
\begin{theorem}\label{th:diffuni_bound}
If $\grp{G}$ is a finite abelian group and $f,g:\grp{G} \to \grp{G}$, then
\begin{equation*}
\begin{aligned}
\delta_{g+f}\le \delta_{f}\cdot \big(V(g)^{2}-V(g)+1\big). 
\end{aligned}
\end{equation*}
In particular, if $V(g)=\pres(f)$, then 
\begin{equation}\label{eq:diffuni_bound_pres}
\begin{aligned}
\delta_{g+f}\le \delta_{f}\cdot \big(\pres(f)^{2}-\pres(f)+1\big). 
\end{aligned}
\end{equation}
\end{theorem}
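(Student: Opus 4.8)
The plan is to bound $u(\Delta_{g+f,a})$ uniformly over all nonzero directions $a$. Fix $a\in\grp{G}^{\star}$ and $b\in\grp{G}$, and count the solutions of $\Delta_{g+f,a}(x)=b$. Because $\grp{G}$ is abelian, the difference operator splits additively over a sum of functions:
\[
\Delta_{g+f,a}(x)=\big(g(x+a)+f(x+a)\big)-\big(g(x)+f(x)\big)=\Delta_{g,a}(x)+\Delta_{f,a}(x)
\]
for every $x\in\grp{G}$. This identity, which would fail for a general non-abelian $\grp{G}$, is the only place the hypothesis is used. Thus $x$ is a solution of $\Delta_{g+f,a}(x)=b$ exactly when $\Delta_{f,a}(x)=b-\Delta_{g,a}(x)$.

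Next I would restrict the possible values of $\Delta_{g,a}$. Since $g(x+a)$ and $g(x)$ both lie in $\im{g}$, every value $\Delta_{g,a}(x)$ lies in the difference set $D:=\{\,y-y'\,:\,y,y'\in\im{g}\,\}$. Counting the pairs $(y,y')$ defining $D$, the $V(g)$ diagonal pairs all contribute $0$, leaving at most $V(g)^{2}-V(g)$ further pairs, so $\#D\le 1+\big(V(g)^{2}-V(g)\big)=V(g)^{2}-V(g)+1$. Write $D=\{d_{1},\dots,d_{m}\}$ with $m\le V(g)^{2}-V(g)+1$.

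Now partition the solution set according to which value $\Delta_{g,a}$ takes:
\[
(\Delta_{g+f,a})^{-1}(b)=\bigsqcup_{j=1}^{m}\big\{\,x\in\grp{G}\,:\,\Delta_{g,a}(x)=d_{j}\ \text{and}\ \Delta_{f,a}(x)=b-d_{j}\,\big\}.
\]
The union is disjoint because $\Delta_{g,a}$ is a well-defined function of $x$, and the $j$-th block is contained in $(\Delta_{f,a})^{-1}(b-d_{j})$, a set of size at most $u(\Delta_{f,a})\le\delta_{f}$. Summing the $m$ blocks gives $\#(\Delta_{g+f,a})^{-1}(b)\le m\,\delta_{f}\le\big(V(g)^{2}-V(g)+1\big)\delta_{f}$. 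Taking the maximum over $a\in\grp{G}^{\star}$ and $b\in\grp{G}$ yields the stated inequality, and the ``in particular'' clause is then immediate by choosing any $g$ with $g+f\in\Omega_{\grp{G}}$ and $V(g)=\pres(f)$.

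There is no serious obstacle here; the only points needing a little care are to use the sharp bound $\#D\le V(g)^{2}-V(g)+1$ rather than the crude $\#D\le V(g)^{2}$, and to observe that the partition above is genuinely disjoint so that the block-counts add without overlap. One can also note that the estimate is typically far from tight, since it treats every attained value of $\Delta_{g,a}$ as if it forced a full $\delta_{f}$ preimages into a single fibre of $\Delta_{g+f,a}$.
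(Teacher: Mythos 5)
Your proposal is correct and follows essentially the same route as the paper: split $\Delta_{g+f,a}=\Delta_{g,a}+\Delta_{f,a}$ using commutativity, partition the fibre of $\Delta_{g+f,a}$ by the value of $\Delta_{g,a}$, bound each block by $\delta_f$, and bound the number of possible values by the size of the difference set of $\im{g}$, namely $V(g)^2-V(g)+1$. No gaps; the argument matches the paper's proof step for step.
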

\begin{proof}
Let $\grp{G}$ be a finite abelian group and $f,g:\grp{G} \to \grp{G}$. Observe that since $\grp{G}$ is abelian, $\Delta_{g+f,a}(x)=g(x+a)+f(x+a)-g(x)-f(x)=\Delta_{g,a}(x)+\Delta_{f,a}(x)$. So for $b\in \grp{G}$, 
\begin{equation*}
\begin{aligned}
\#\Delta_{g+f,a}^{-1}(b)&=\#\{x\,:\,\Delta_{g+f,a}(x)=b\}\\
&=\#\bigcup_{z\in \grp{G}}\{x\,:\,\Delta_{g,a}(x)=z\text{ and }\Delta_{f,a}(x)=b-z\}\\
&=\sum_{z\in \grp{G}}\#\bigg(\{x\,:\,\Delta_{g,a}(x)=z\}\cap\{x\,:\,\Delta_{f,a}(x)=b-z\}\bigg) \\
&\le \sum_{z\in\im{\Delta_{g,a}}}\#\{x\,:\,\Delta_{f,a}(x)=b-z\}\\
&\le \delta_f\cdot V(\Delta_{g,a})
\end{aligned}
\end{equation*}
The result follows from the fact that $\im{\Delta_{g,a}}=\{g(x+a)-g(x)\,:\, x\in\grp{G}\}$ must be a subset of $\{g_1-g_2\,:\, g_1,g_2\in\im{g}\}$, which has at most $V(g)(V(g)-1)+1$ elements. 
\end{proof}
In the case where $f$ is planar, (\ref{eq:diffuni_bound_pres})
simplifies to $\delta_{g+f}\le \pres(f)^{2}-\pres(f)+1$.
Thus, Theorem \ref{th:diffuni_bound} shows that one can approach the
problem of constructing near optimal DU permutations by looking for planar 
functions with low permutation resemblance. As mentioned, there are generally
multiple $g$ for which $g+f$ is a permutation and $V(g)=\pres(f)$, and
it is easily checked computationally (and not at all surprising) that different
$g$ yield permutations $g+f$ with differing DU.
Based on these observations, we end the section by considering the
permutation resemblance of planar functions.

For a planar function $f$, it is known that $V(f)\ge (q+1)/2$, see any
of \cite{qiu07, kyureghyan08, coulter11} for instance.
Appealing to the upper bound in (\ref{eq:lb_ub_pres}) immediately yields
\begin{lemma}\label{th:pres_ub_planar}
If $f\in\ffx{q}$ is planar, then 
\[\pres(f)\le \frac{q+1}{2}. \]
\end{lemma}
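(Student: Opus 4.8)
The plan is to combine the general upper bound from Theorem~\ref{th:lb_ub_pres} with the classical lower bound on the image size of a planar function. Recall that the excerpt already records both ingredients: the right-hand inequality in \eqref{eq:lb_ub_pres} gives $\pres(f)\le q-V(f)+1$ for \emph{any} $f:\ff{q}\to\ff{q}$, and it is noted (with references \cite{qiu07, kyureghyan08, coulter11}) that any planar $f\in\ffx{q}$ satisfies $V(f)\ge (q+1)/2$. So the argument is essentially a one-line substitution.

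Concretely, I would first invoke Theorem~\ref{th:lb_ub_pres} to get $\pres(f)\le q-V(f)+1$. Next I would apply the planar image bound $V(f)\ge (q+1)/2$, which yields $-V(f)\le -(q+1)/2$, hence
\[
\pres(f)\le q-V(f)+1\le q-\frac{q+1}{2}+1=\frac{2q-(q+1)+2}{2}=\frac{q+1}{2}.
\]
That completes the proof.

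There is no genuine obstacle here: the lemma is an immediate corollary, and the only ``work'' is citing the correct prior results (Theorem~\ref{th:lb_ub_pres} for the upper bound on $\pres$, and the literature for $V(f)\ge (q+1)/2$ in the planar case) and performing the trivial arithmetic simplification. If anything, the only point worth a sentence is to note that $q$ is odd whenever planar functions exist, so $(q+1)/2$ is an integer and the bound is stated in its sharpest integral form.
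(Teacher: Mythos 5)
Your proof is correct and is exactly the paper's argument: the lemma is stated as an immediate consequence of the upper bound $\pres(f)\le q-V(f)+1$ from Theorem~\ref{th:lb_ub_pres} together with the known bound $V(f)\ge (q+1)/2$ for planar functions. Nothing further is needed.
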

To determine a lower bound, we restrict ourselves to a special type of 
planar function; namely those that are referred to as 2-to-1, which means
that $f(0)=0$ and $f$ is 2-to-1 on non-zero elements.
All known planar function classes consist of such a planar function, and
this restriction has been used in several places to obtain important results,
see for instance \cite{ding06} or \cite{qiu07}.
We first prove a necessary condition when a $d$-to-one function satisfies that $\pres(f)=u(f)$. 
\begin{lemma}\label{lm:when_pres_is_uf}
Let $f\in\ffx{q}$. Assume that $f(0)=0$ and $f$ is $d$-to-one over $\ffs{q}$. If $\pres(f)=u(f)=d$, then there exists a $d$-subset $\{c_1,c_2,\dots,c_d\}\subseteq\ff{q}$, such that for all $x,y\in\ffs{q}$, $f(x)-f(y)\notin\{c_i-c_j\,:\, 1\le i,j\le d, i\neq j\}$. 
\end{lemma}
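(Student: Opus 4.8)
The plan is to suppose $\pres(f) = u(f) = d$, so there exists $g \in \ffx{q}$ with $g + f \in \Omega_{\ff{q}}$ and $V(g) = d$. As noted in the text preceding Theorem~\ref{th:pres_ppoly}, we may assume $g(0) = 0$. The key structural fact we exploit is Lemma~\ref{lm:g_1to1_on_preimg}: since $g + f$ is a permutation, $g$ must be injective on every preimage set of $f$. The largest preimage set of $f$ is any $f^{-1}(b)$ with $b \in \ims{f}$, which has exactly $d = u(f)$ elements; since $V(g) = d$ and $g$ is injective on this $d$-element set, it follows that $g$ maps $f^{-1}(b)$ bijectively onto all of $\im{g}$. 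The natural candidate for the claimed $d$-subset is therefore $\{c_1, \dots, c_d\} := \im{g}$.

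Next I would verify the non-containment condition. Suppose for contradiction that there exist $x, y \in \ffs{q}$ with $f(x) - f(y) = c_i - c_j$ for some $i \neq j$, i.e.\ $f(x) - f(y) = g(w) - g(z)$ for distinct values $g(w), g(z) \in \im{g}$. Rearranging gives $f(x) + g(z) = f(y) + g(w)$, which I want to massage into an equation of the form $(g+f)(\cdot) = (g+f)(\cdot)$ at two distinct points, contradicting that $g+f$ is a permutation. To do this cleanly I would use the bijection established in the first paragraph: because $g$ maps $f^{-1}(b)$ (for a fixed nonzero $b \in \im{f}$) bijectively onto $\im{g}$, there exist $x', y' \in f^{-1}(b)$ with $g(x') = c_j = g(z)$ and $g(y') = c_i = g(w)$, and $x' \neq y'$ since $c_i \neq c_j$. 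Then $f(x') = f(y') = b$, so $f(x) - f(y) = g(x') - g(y')$ can be rewritten — but one has to be careful here: $x$ and $y$ range over all of $\ffs{q}$, not just over $f^{-1}(b)$.

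The cleaner route, which I would pursue instead, is to avoid reintroducing $x, y$ and work entirely within the structure: the condition to prove is really that the $d$-subset $\im{g}$ has the property that the difference set $\{c_i - c_j : i \neq j\}$ avoids $\{f(x) - f(y) : x, y \in \ffs{q}\} = \{\Delta_{f,a}(x) - \text{(something)}\}$... actually the direct approach is best. Fix nonzero $b \in \im{f}$ so $\#f^{-1}(b) = d$ and $g|_{f^{-1}(b)}$ is a bijection onto $\im{g} = \{c_1,\dots,c_d\}$. Now take any $x, y \in \ffs{q}$; I claim $f(x) - f(y) \notin \{c_i - c_j : i \neq j\}$. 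Suppose instead $f(x) - f(y) = c_i - c_j$ with $i \neq j$. Pick $u, v \in f^{-1}(b)$ with $g(u) = c_i$, $g(v) = c_j$; then $u \neq v$. We would like to build two points where $g+f$ collides. Consider: since $f(x) = f(y) + c_i - c_j = f(y) + g(u) - g(v)$. Hmm — this still needs $g(x)$ and $g(y)$ controlled. The honest fix: $g$ need not be injective globally, so I should instead observe that the hypothesis forces a collision via the \emph{specific} values. Let me reconsider — take $p, r \in \ffs{q}$ with $g(p) = c_i$, $g(r) = c_j$ (these exist by surjectivity of $g$ onto $\im g$, though $p, r$ may differ from $x, y$). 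This is getting tangled, which signals that the real argument pins down $x, y$ more tightly: the collision must occur among preimages of a single value of $f$, so the condition $f(x) - f(y) = c_i - c_j$ should be shown impossible by taking $x, y$ themselves in $f^{-1}(b)$ for the contradiction, after first reducing the general case to that one.

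\textbf{The main obstacle}, then, is precisely this reduction: showing that it suffices to check the difference condition on preimage sets where $g$ is a bijection, and then deriving the contradiction there. Concretely, in $f^{-1}(b)$ (with $b \neq 0$ fixed, $\#f^{-1}(b) = d$), suppose $u, v \in f^{-1}(b)$, $u \neq v$, with $g(u) = c_i$, $g(v) = c_j$. Then $(g+f)(u) = c_i + b$ and $(g+f)(v) = c_j + b$, which are distinct iff $c_i \neq c_j$ — consistent, no contradiction yet. The contradiction must come from comparing across two different fibers. Take $b_1 \neq b_2$ both nonzero in $\im{f}$, each with full fiber of size $d$ on which $g$ bijects onto $\{c_1,\dots,c_d\}$; if $f(x) - f(y) = b_1 - b_2 = c_i - c_j$, choose $u \in f^{-1}(b_1)$ with $g(u) = c_j$ and $v \in f^{-1}(b_2)$ with $g(v) = c_i$; then $(g+f)(u) = c_j + b_1$ and $(g+f)(v) = c_i + b_2 = c_i + b_1 - (c_i - c_j) \cdot(\text{wait, } b_2 = b_1 - (b_1-b_2)) = b_1 - (c_i - c_j)$, hence $(g+f)(v) = c_i + b_1 - c_i + c_j = c_j + b_1 = (g+f)(u)$ with $u \neq v$ — the sought contradiction. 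So the real content is: (i) the fibers of size $d$ are exactly those over $\ims{f}$, and there are $V(f) - 1 \geq 1$ of them (in fact all of them when $f$ is $d$-to-one on $\ffs{q}$, so $d \mid q-1$ and there are $(q-1)/d$ such fibers); (ii) $g$ bijects each onto the common set $\im{g}$; (iii) if some difference $f(x)-f(y)$ with $x,y \in \ffs{q}$ equals some $c_i - c_j$, this difference equals $b_1 - b_2$ for suitable nonzero $b_1, b_2 \in \im f$ (here using that $f$ is $d$-to-one on $\ffs{q}$, so $f(x), f(y) \in \ims{f}$), and the fiber argument above yields a collision in $g+f$. This last bookkeeping — matching up which representative of which fiber to pick so that the $(g+f)$-values coincide — is the step I expect to require the most care, but it is essentially the computation displayed above.
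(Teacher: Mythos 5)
Your final argument is correct and is essentially the paper's own proof: both take $\{c_1,\dots,c_d\}=\im{g}$ for a $g$ with $g+f\in\Omega_{\ff{q}}$ and $V(g)=d$, use Lemma \ref{lm:g_1to1_on_preimg} together with $V(g)=d$ to conclude that $g$ maps each fiber $f^{-1}(b)$, $b\in\ims{f}$, bijectively onto $\im{g}$, and then invoke the injectivity of $g+f$ — you frame this last step as a collision $(g+f)(u)=(g+f)(v)$ with $u\neq v$ obtained by contradiction, whereas the paper states it directly as the disjointness of the sets $(g+f)(P_i)=c_i+\ims{f}$ over the level sets $P_i$ of $g$, which is the same fact. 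Only your last paragraph is needed; the exploratory middle portion can be discarded.
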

\begin{proof}
Let $f\in\ffx{q}$, $f(0)=0$ and let $f$ be $d$-to-one over $\ffs{q}$. If $\pres(f)=u(f)=d$, then there is a function $g\in\ffx{q}$ such that $g+f$ is a PP and $V(g)=d$. We may also assume that $g(0)=0$, say $\im{g}=\{c_1=0,c_2,\dots,c_d\}$. For $1\le i\le d$, define $P_i=\{x\in\ffs{q}\,:\, g(x)=c_i\}$, the preimage sets of $g$ over $\ffs{q}$. The sets $P_i$'s form a partition $\ffs{q}$, so $\sum_{i=1}^{d}\# P_i=q-1$. 

By Lemma \ref{lm:g_1to1_on_preimg}, $g$ is one-to-one on each $f^{-1}(b)$ for $b\in\im{f}$. Since for all $b\in\ims{f}$, $\#f^{-1}(b)=d$, $f^{-1}(b)$ must contain exactly one element from each of $P_i$ for $1\le i\le d$. Since $\#(\ims{f})= (q-1)/d$ and $\sum_{i=1}^{d}\# P_i=q-1$, by counting the elements we see that $\#P_i=(q-1)/d$ for all $1\le i\le d$, and every $P_i$ contains exactly one element from each $f^{-1}(b)$ for $b\in\ims{f}$. Consequently, $f(P_i)=\ims{f}$. 

Since $g+f$ is a PP and $(g+f)(0)=0$, we can partition $\ffs{q}$ as follows: 
\begin{equation*}
\begin{aligned}
\ffs{q}&=\bigsqcup_{i=1}^{d} (g+f)(P_i)=\bigsqcup_{i=1}^{d} \{c_i+f(x)\,:\, x\in \ffs{q}\}. 
\end{aligned}
\end{equation*}
In particular, any two sets in this union are disjoint. Thus, for all $x,y\in\ffs{q}$ and $i\neq j$, $c_i+f(x)\neq c_j+f(y)$, which implies
\[f(x)-f(y)\neq c_j-c_i.\]
\end{proof}
\begin{theorem}\label{th:planar_2to1_pres_ge2}
Let $q>5$ be an odd prime or odd prime power. If $f\in\ffx{q}$ is a planar function, $f(0)=0$ and $f$ is two-to-one over $\ffs{q}$, then $\pres(f)>2$. 
\end{theorem}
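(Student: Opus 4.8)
The plan is to argue by contradiction using Lemma \ref{lm:when_pres_is_uf}. Since $f$ is two-to-one over $\ffs{q}$ we have $u(f)=2$, and if $\pres(f)=2$ then by Theorem \ref{th:lb_ub_pres} we are in the case $\pres(f)=u(f)=2$. Lemma \ref{lm:when_pres_is_uf} with $d=2$ then gives a $2$-subset $\{c_1,c_2\}\subseteq\ff{q}$ such that $f(x)-f(y)\neq c_2-c_1$ and $f(x)-f(y)\neq c_1-c_2$ for all $x,y\in\ffs{q}$. Writing $c=c_2-c_1\neq 0$, this says that the \emph{difference set} $D:=\{f(x)-f(y)\,:\, x,y\in\ffs{q}\}$ avoids both $c$ and $-c$. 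So the whole proof reduces to showing that for a two-to-one planar $f$ with $q>5$, the set $D$ is \emph{all} of $\ff{q}$ (or at least cannot avoid a nonzero element together with its negative).

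The key step is therefore a counting/pigeonhole estimate on $D$. First I would exploit planarity: since $f$ is planar, each $\Delta_{f,a}(x)=f(x+a)-f(x)$ is a permutation of $\ff{q}$, and since $f$ is two-to-one on $\ffs{q}$ with $f(0)=0$, the image set $\im{f}$ has size $(q+1)/2$ (this is the $V(f)=(q+1)/2$ equality for two-to-one planar functions, consistent with Lemma \ref{th:pres_ub_planar}). Let $T=\im{f}\setminus\{0\}=f(\ffs{q})$, so $\#T=(q-1)/2$. The target set is $D'=\{t_1-t_2\,:\, t_1,t_2\in T\}$ (which equals $D$ since $0\in\im{f}$ is attained only at $0$, not on $\ffs{q}$, but differences of elements of $T$ already exhaust the relevant set). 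A set $T$ of size $(q-1)/2$ in $\ff{q}$ has a difference set that is quite large; I would show $\#D'$ is so large that $D'$ cannot omit a pair $\{c,-c\}$. One route: the number of ordered pairs $(t_1,t_2)\in T^2$ is $((q-1)/2)^2$, the difference $0$ is hit exactly $(q-1)/2$ times, so the nonzero differences are hit $((q-1)/2)^2-(q-1)/2 = (q-1)(q-3)/4$ times among at most $q-1$ possible nonzero values; this forces many values to be hit but does not by itself rule out two being missed. So the planarity must be used more carefully — e.g. via the additive-character sum / autocorrelation of the indicator of $\im{f}$, which for planar $f$ is controlled (the relevant Weil-type bound shows $\im{f}$ is close to a "perfect difference set" and in particular its complement is small enough that $D'=\ff{q}$ once $q>5$). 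Alternatively, use that for each nonzero $a$, $\Delta_{f,a}$ being a bijection means $f(\ffs{q})$ and $f(\ffs{q})+(\text{something})$ interact transversally, yielding that $D'$ contains $a$ whenever the translate $T+a$ meets $T$, and a planar function forces $\#(T\cap (T+a))$ to be essentially $(q-1)/4$ for every $a\neq 0$, hence $\#(T\cap(T+a))>0$ for all $a$, i.e. $D'=\ff{q}\setminus\{0\}\cup\{0\}=\ff{q}$ — a direct contradiction with $c\notin D'$. The inequality $q>5$ is exactly what makes $(q-1)/4>0$ as an integer count survive the error term.

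The main obstacle I anticipate is pinning down the correct statement and proof of "$T\cap(T+a)\neq\emptyset$ for all $a\in\ffs{q}$" (equivalently $D'=\ff{q}$) from two-to-one planarity with the sharp threshold $q>5$. The small cases $q=3,5$ genuinely fail — e.g. over $\ff{5}$, $x^2$ has $\im{f}=\{0,1,4\}$, $T=\{1,4\}$, and $D'=\{0,2,3\}=\{0,\pm2\}$ misses $\{1,4\}=\{\pm1\}$ — which is consistent with the earlier remark that $\pres(x^2(1-x))$ phenomena over $\ff{5}$ are delicate, and explains why the hypothesis is $q>5$ rather than $q\ge 3$. So the argument must be tight enough to exclude exactly $q=3,5$; I expect this needs either the precise autocorrelation identity for planar functions (each nonzero $a$ appears as a difference $f(x+a)-f(x)$ in a bijective way, giving an exact count of representations of each $a$ as a difference of two elements of $T$ after separating the contributions involving $0$) or a short Cauchy–Schwarz / character argument. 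Once $D'=\ff{q}$ is established, the contradiction with Lemma \ref{lm:when_pres_is_uf} is immediate and the proof concludes.
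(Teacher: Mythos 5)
Your reduction is set up correctly: two-to-one gives $u(f)\ge 2$, the lower bound of Theorem \ref{th:lb_ub_pres} gives $u(f)\le\pres(f)=2$ under the assumption $\pres(f)=2$, and Lemma \ref{lm:when_pres_is_uf} with $d=2$ then yields a nonzero $c$ with $f(x)-f(y)\neq\pm c$ for all $x,y\in\ffs{q}$. But from there the proof is not finished: everything rests on the claim that the difference set $\{f(x)-f(y)\,:\,x,y\in\ffs{q}\}$ cannot avoid $\{c,-c\}$, and you explicitly leave this as ``the main obstacle,'' hedging among Weil-type bounds, Cauchy--Schwarz, and an autocorrelation identity. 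As you yourself note, the cardinality of $T=f(\ffs{q})$ alone cannot close it (a set of size $(q-1)/2$ can be disjoint from a translate of itself), so as written there is a genuine gap. The good news is that the parenthetical route you sketch does work and needs no character sums: fix $a\in\ffs{q}$ and write $x=y+u$; for each $u\in\ffs{q}$ planarity gives exactly one $y$ with $f(y+u)-f(y)=a$, and $u=0$ contributes nothing, so $f(x)-f(y)=a$ has exactly $q-1$ solutions $(x,y)\in\ff{q}^2$. At most two of these have $y=0$ (solutions of $f(x)=a$) and at most two have $x=0$ (solutions of $f(y)=-a$), leaving at least $q-5>0$ solutions with $x,y\in\ffs{q}$ when $q>5$ --- contradicting the lemma, and matching exactly the failure you observed for $x^2$ over $\ff{5}$.

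For comparison, the paper avoids any global count of the difference set. It fixes $a\in\ffs{q}$ and uses that $\Delta_{f,a}$ is a PP, so the two distinct values $c$ and $-c$ are both attained; by the lemma they can only be attained at $x\in\{0,-a\}$, and evaluating there gives $\{f(a),-f(-a)\}=\{c,-c\}$, hence $f(a)\in\{\pm c\}$ for every $a\in\ffs{q}$. Thus $f(\ffs{q})\subseteq\{\pm c\}$, contradicting $\#f(\ffs{q})=(q-1)/2>2$ for $q>5$. Both arguments use the same two ingredients (Lemma \ref{lm:when_pres_is_uf} plus bijectivity of the differential operators), but the paper's is a pointwise argument pinning the forced preimages to $\{0,-a\}$, whereas your route, once completed as above, is a counting argument proving the stronger fact that every nonzero element of $\ff{q}$ is a difference $f(x)-f(y)$ with $x,y\in\ffs{q}$ --- more than is needed, but perfectly valid once the count is actually carried out.
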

\begin{proof}
Let $f$ be a two-to-one planar function such that $f(0)=0$. Since $f$ is planar, for $a\in \ffs{q}$, $f(x+a)-f(x)$ is a PP. If $\pres(f)=2$, then by Lemma \ref{lm:when_pres_is_uf}, there are two elements, one of them can be $0$, another one $c\in\ffs{q}$, such that $f(x+a)-f(x)\neq \pm c$ for any fixed $a\in\ffs{q}$ and $x\notin\{0,-a\}$. But since $f(x+a)-f(x)$ is a PP, $f(x+a)-f(x)=\pm c$ must have a solution. Therefore, for all $a\in \ffs{q}$, $0$ has to be a solution of either $f(x+a)-f(x)=c$ or $f(x+a)-f(x)=-c$, which means either $f(a)=c$ or $f(a)=-c$. This implies that $f(\ffs{q})=\{\pm c\}$. However, since $f$ is two-to-one over $\ffs{q}$, $\#f(\ffs{q})=(q-1)/2$. So this is impossible when $(q-1)/2>\#\{\pm c\}=2$. 
\end{proof}

\section{Permutation Resemblance and Some Known Measure of Bijectivity}\label{sc:pres_ns}

In this final section, we show that the upper bound in Theorem \ref{th:lb_ub_pres} can be related to the values $N_s(f)$ as defined in (\ref{eq:def_ns}).
This also shows how we can determine an upper bound estimate for $\pres$ in 
cases where the full knowledge of the values for $N_s(f)$ is not known.

Let $f:\grp{G}\to\grp{G}$. Recall that for $0\le r\le u(f)$, $M_r(f)$ is the number of $b\in\grp{G}$ that has exactly $r$ preimages under $f$, as defined in the proof of Theorem \ref{th:LB=UB}. By counting all possible $s$-tuples in each preimage set, we obtain
\begin{equation}\label{eq:ns_sum}
N_s(f)=\sum_{r=s}^{u}(r)_s M_r(f),
\end{equation}
where $(r)_s=r(r-1)(r-2)\cdots(r-s+1)$. The lower bound of $V(f)$ in terms of $N_s(f)$ given in \cite{coulter14c} implies that 
\begin{equation}\label{eq:vf_lb_n2}
V(f)\ge q-\frac{N_2(f)}{2}.
\end{equation} 
Applying this to the upper bound in Theorem \ref{th:lb_ub_pres} gives
\[\pres(f)\le \frac{N_2(f)}{2}+1.\]
This bound is useful as most likely one only has information of $N_2(f)$. But it can be made stronger if one has further information about $N_s(f)$ for larger $s$. In fact, we can use all $N_s(f)$ for $2\le s\le u(f)$, instead of a single choice of $s$. In the following, we show that $q-V(f)=M_0(f)$ is in fact a sum that involves all $N_s(f)$.  
\begin{theorem}\label{th:m0_sum}
Let $f:\grp{G}\to\grp{G}$. Then 
\begin{equation}\label{eq:pfz_m0}
\begin{aligned}
M_0(f)=\sum_{s=2}^{u(f)}\frac{N_s(f)}{s!}(-1)^{s}. 
\end{aligned}
\end{equation}
In particular, 
\begin{equation}\label{eq:pfz_m0_bounds}
\begin{aligned}
\frac{N_2(f)}{2!}-\frac{N_3(f)}{3!}\le M_0(f)\le \frac{N_2(f)}{2!}, 
\end{aligned}
\end{equation}
where the lower bound holds if and only if $M_r(f)=0$ for all $r\ge 4$, and the upper bound holds if and only if $M_r(f)=0$ for all $r\ge 3$ (both bounds coincide in the later case). 
\end{theorem}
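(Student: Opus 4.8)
The plan is to treat (\ref{eq:pfz_m0}) as a finite double-sum manipulation, reducing everything to the single binomial identity $\sum_{s=0}^{m}(-1)^{s}\binom{r}{s}=(-1)^{m}\binom{r-1}{m}$ (valid for $m,r\ge 0$, with the convention $\binom{r-1}{m}=0$ when $m>r-1$; in particular the sum is $0$ when $m=r\ge 1$). First I would rewrite (\ref{eq:ns_sum}) in the more convenient form
\[
\frac{N_s(f)}{s!}=\sum_{r=s}^{u(f)}\binom{r}{s}M_r(f),
\]
using $(r)_s/s!=\binom{r}{s}$. Substituting this into $\sum_{s=2}^{u(f)}(-1)^{s}N_s(f)/s!$ and interchanging the two (finite) summations gives
\[
\sum_{s=2}^{u(f)}(-1)^{s}\frac{N_s(f)}{s!}=\sum_{r=2}^{u(f)}M_r(f)\sum_{s=2}^{r}(-1)^{s}\binom{r}{s}.
\]
By the identity above with $m=r$, the inner sum equals $0-\binom{r}{0}+\binom{r}{1}=r-1$, so the whole expression collapses to $\sum_{r=2}^{u(f)}(r-1)M_r(f)$, which is exactly $M_0(f)$ by (\ref{eq:sum_m0_1}). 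This proves (\ref{eq:pfz_m0}).

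For (\ref{eq:pfz_m0_bounds}) I would carry out the same swap but truncate the alternating sum at $s=m$ for $m=2$ and $m=3$. The inner sum becomes $\sum_{s=2}^{\min(m,r)}(-1)^{s}\binom{r}{s}$, which equals $r-1$ when $r\le m$ and, when $r>m$, equals $(r-1)+(-1)^{m}\binom{r-1}{m}$ by the binomial identity. Hence
\[
\sum_{s=2}^{m}(-1)^{s}\frac{N_s(f)}{s!}-M_0(f)=(-1)^{m}\sum_{r=m+1}^{u(f)}\binom{r-1}{m}M_r(f).
\]
For $m=2$ the right-hand side is $\sum_{r=3}^{u(f)}\binom{r-1}{2}M_r(f)\ge 0$, giving the upper bound $M_0(f)\le N_2(f)/2!$, with equality iff $M_r(f)=0$ for all $r\ge 3$. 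For $m=3$ the right-hand side is $-\sum_{r=4}^{u(f)}\binom{r-1}{3}M_r(f)\le 0$, giving the lower bound $N_2(f)/2!-N_3(f)/3!\le M_0(f)$, with equality iff $M_r(f)=0$ for all $r\ge 4$; and when $M_r(f)=0$ for all $r\ge 3$ both error terms vanish, so the two bounds coincide with $M_0(f)$.

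I do not expect a genuine obstacle here: the only things to watch are the edge cases when $u(f)$ is small (so that some of the displayed sums are empty, matching the "empty sum" convention already used after (\ref{eq:sum_m0_1})), and making sure the binomial identity is applied with the correct convention $\binom{r-1}{m}=0$ for $m\ge r$. The mild conceptual point worth stating explicitly is that the partial sums of the alternating series in (\ref{eq:pfz_m0}) bracket $M_0(f)$ from alternating sides, which is precisely what the displayed error formula shows and is what yields (\ref{eq:pfz_m0_bounds}) as the first two instances.
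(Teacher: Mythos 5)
Your argument is correct, and it reaches (\ref{eq:pfz_m0}) by a genuinely different (and more elementary) route than the paper, while your treatment of (\ref{eq:pfz_m0_bounds}) is in substance the same as the paper's. For the identity itself, the paper introduces the generating function $P_f(z)=\sum_{r=0}^{u(f)}M_r(f)z^{r}$, observes that $P_f(1)=P_f'(1)=q$ and $P_f^{(s)}(1)=N_s(f)$ for $2\le s\le u(f)$ by (\ref{eq:q_sum_mr}) and (\ref{eq:ns_sum}), writes down the Taylor expansion of $P_f$ about $z=1$, and evaluates at $z=0$; you instead substitute $N_s(f)/s!=\sum_{r\ge s}\binom{r}{s}M_r(f)$, swap the finite double sum, evaluate the inner alternating sum to $r-1$, and invoke the earlier bookkeeping identity (\ref{eq:sum_m0_1}), so no generating function is needed. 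For the bounds, the paper performs exactly the same substitution and swap on the tail $\sum_{s\ge c}(-1)^{s}N_s(f)/s!$ and then identifies $-\sum_{s=0}^{c-1}(-1)^{s}\binom{r}{s}$ as the degree-$(c-1)$ polynomial $\frac{(-1)^{c}}{(c-1)!}\prod_{i=1}^{c-1}(r-i)$ by locating its roots $r=1,\dots,c-1$; this is precisely your $(-1)^{c}\binom{r-1}{c-1}$, which you obtain from the standard identity $\sum_{s=0}^{m}(-1)^{s}\binom{r}{s}=(-1)^{m}\binom{r-1}{m}$, and in both treatments the equality conditions follow because all coefficients $\binom{r-1}{m}$ with $r>m$ are strictly positive. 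What each approach buys: yours is self-contained at the level of finite sums and recycles (\ref{eq:sum_m0_1}), and your explicit error formula $\sum_{s=2}^{m}(-1)^{s}N_s(f)/s!-M_0(f)=(-1)^{m}\sum_{r>m}\binom{r-1}{m}M_r(f)$ immediately gives the alternating bracketing at every truncation level; the paper's generating-function viewpoint packages the identity as a Taylor expansion and likewise yields the general sign of the tail for arbitrary $c$, which is what licenses the later remark that one may truncate at any even $s$, e.g. $\pres(f)\le 1+\frac{N_2(f)}{2!}-\frac{N_3(f)}{3!}+\frac{N_4(f)}{4!}$ in combination with Theorem \ref{th:lb_ub_pres} --- information your general-$m$ formula recovers equally well. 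The only housekeeping points, which you already flag, are the empty-sum conventions when $u(f)<m$ (equivalently $N_s(f)=0$ for $s>u(f)$) and the convention $\binom{r-1}{m}=0$ for $m>r-1$.
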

\begin{proof}
Consider the ordinary generating function $P_f(z)\in\cpx[z]$ of the preimage distribution
\begin{equation}\label{eq:pfz_zero}
\begin{aligned}
P_f(z)=\sum_{r=0}^{u(f)}M_r(f) z^{r}.
\end{aligned}
\end{equation}
Then by (\ref{eq:q_sum_mr}), $P_f(1)=q$, and 
\begin{equation*}
\begin{aligned}
\frac{d}{dz}  P_f(z)\bigg|_{z=1}=\sum_{r=1}^{u(f)}rM_r(f) z^{r-1}\bigg|_{z=1}=q.
\end{aligned}
\end{equation*}
Furthermore, by (\ref{eq:ns_sum}), for $2\le s\le u(f)$, 
\begin{equation*}
\begin{aligned}
\frac{d^s}{dz^{s}}  P_f(z)\bigg|_{z=1}=\sum_{r=s}^{u(f)}r(r-1)\cdots(r-s+1)M_r(f) z^{r-s}\bigg|_{z=1}=N_s(f).
\end{aligned}
\end{equation*}
Hence, 
\begin{equation}\label{eq:pfz_one}
\begin{aligned}
P_f(z)=q+q(z-1)+\sum_{s=2}^{u(f)}\frac{N_s(f)}{s!}(z-1)^{s}.
\end{aligned}
\end{equation}
Clearly, $P_f(0)=M_0(f)$ from (\ref{eq:pfz_zero}). Therefore, we obtain (\ref{eq:pfz_m0}) by setting $z=0$ in (\ref{eq:pfz_one}). 

To show the bounds in (\ref{eq:pfz_m0_bounds}), we would like to estimate $M_0(f)$ by truncating a few terms from (\ref{eq:pfz_m0}). Suppose we only take the first $s=2,\dots,c-1$ terms for some $3\le c\le u(f)$. Substituting (\ref{eq:ns_sum}) into the tail of (\ref{eq:pfz_m0}) gives
\begin{equation}\label{eq:pfz_m0_tail}
\begin{aligned}
\sum_{s=c}^{u(f)}\frac{N_s(f)}{s!}(-1)^{s}&=\sum_{s=c}^{u(f)}(-1)^{s}\sum_{r=s}^{u(f)}\binom{r}{s}M_r(f)\\
&=\sum_{r=c}^{u(f)}M_r(f)\sum_{s=c}^{r}\binom{r}{s}(-1)^{s}.
\end{aligned}
\end{equation}
Since $\sum_{s=0}^{r}\binom{r}{s}(-1)^{s}=(1-1)^{r}=0$, 
\begin{equation*}
\begin{aligned}
\sum_{s=c}^{r}\binom{r}{s}(-1)^{s}=-\sum_{s=0}^{c-1}\binom{r}{s}(-1)^{s}.
\end{aligned}
\end{equation*}
This sum is a polynomial of degree $c-1$ in $r$, and the coefficient of $r^{c-1}$ is $\frac{(-1)^{c}}{(c-1)!}$. We claim that the roots of this polynomial are $r=1,2,\dots,c-1$. Indeed, since $\binom{r}{s}=0$ if $r<s$, for any $1\le r_0\le c-1$, 
\begin{equation*}
\begin{aligned}
-\sum_{s=0}^{c-1}\binom{r_0}{s}(-1)^{s}=-\sum_{s=0}^{r_0}\binom{r_0}{s}(-1)^{s}=-(1-1)^{r_0}=0.
\end{aligned}
\end{equation*}
Therefore, 
\begin{equation*}
\begin{aligned}
-\sum_{s=0}^{c-1}\binom{r}{s}(-1)^{s}=\frac{(-1)^{c}}{(c-1)!}\prod_{i=1}^{c-1}(r-i),
\end{aligned}
\end{equation*}
and (\ref{eq:pfz_m0_tail}) equals to
\begin{equation*}
\begin{aligned}
\sum_{s=c}^{u(f)}\frac{N_s(f)}{s!}(-1)^{s}=\frac{(-1)^{c}}{(c-1)!}\sum_{r=c}^{u(f)}M_r(f)\prod_{i=1}^{c-1}(r-i).
\end{aligned}
\end{equation*}
Hence, the coefficient of $M_r(f)$ has the same sign $(-1)^{c}$ for all $c\le r\le u(f)$. Since $M_r(f)\ge 0$ for all $0\le r\le u(f)$, the tail $\sum_{s=c}^{u(f)}\frac{N_s(f)}{s!}(-1)^{s}\ge 0$ if $c$ is even, and $\le 0$ if $c$ is odd. In particular, when $c=3$ and $c=4$, we obtain (\ref{eq:pfz_m0_bounds}). The bounds hold if and only if $\sum_{s=c}^{u(f)}\frac{N_s(f)}{s!}(-1)^{s}=\frac{(-1)^{c}}{(c-1)!}\sum_{r=c}^{u(f)}M_r(f)\prod_{i=1}^{c-1}(r-i)=0$. But since the coefficients of the terms $M_r(f)$ all have the same sign, the sum is $0$ if and only if $M_r(f)=0$ for all $c\le r\le u(f)$. 
\end{proof}

Note that since $V(f)=q-M_0(f)$, (\ref{eq:pfz_m0}) and (\ref{eq:pfz_m0_bounds}) are equivalent to 
\[V(f)=q-\sum_{s=2}^{u(f)}\frac{N_s(f)}{s!}(-1)^{s},\] and \[q-\frac{N_2(f)}{2!}\le V(f)\le q-\frac{N_2(f)}{2!}+\frac{N_3(f)}{3!}.\]
Thus, we obtain another proof of (\ref{eq:vf_lb_n2}) that is different from the one in \cite{coulter14c}. 

By Theorem \ref{th:m0_sum}, the upper bound of Theorem \ref{th:lb_ub_pres} is equivalent to 
\[\pres(f)\le 1+\sum_{s=2}^{u(f)}\frac{N_s(f)}{s!}(-1)^{s}.\]
Furthermore, the proof of Theorem \ref{th:m0_sum} shows that even if we do not have all information of $N_s(f)$, we can still get a weaker bound by truncating the sum at an even $s$. For example, 
\[\pres(f)\le 1+\frac{N_2(f)}{2!}-\frac{N_3(f)}{3!}+\frac{N_4(f)}{4!}.\]

\subsection*{Inequivalence of Permutation Resemblance to previous notions}

We wish to end the paper by dealing with the question of equivalence/connection
of $\pres$ with two previous notions in the literature.

In order to study the overall bijectivity of all differential operators of a function, two papers \cite{carlet07} and \cite{panario11} independently defined two different notions.
These were were later shown by Fu, Feng, Wang and Carlet \cite{fu19} to be equivalent up to addition and
multiplication of constants.
In this appendix we show that, while similar, $\pres$ is not equivalent to these notions.
The original definitions in \cite{carlet07} and \cite{panario11} are for functions between two finite abelian groups $\grp{G}_1$ and $\grp{G}_2$ which are not necessarily the same. Since in most cases we are interested in functions from a group to itself, we will just focus on the case where $\grp{G}_1=\grp{G}_2=\grp{G}$ with $q$ elements. 

First, in \cite{carlet07}, Carlet and Ding introduced the imbalance and derivative imbalance of $f$ when the size of the domain and the codomain are not necessarily equal, which generalized the concept of bijectivity. The \textit{imbalance} of $f$ is defined by the variance of $\#f^{-1}(b)$ when $b$ from the codomain is chosen uniformly random. When $f:\grp{G}\to\grp{G}$, we have
\begin{equation*}
Nb_f=\sum_{b\in\grp{G}}\left(\# f^{-1}(b)-1\right)^2=\sum_{b\in\grp{G}}\left(\# f^{-1}(b)\right)^2-q, 
\end{equation*}
and the \textit{derivative imbalance} of $f$ is defined by 
\begin{equation*}
NB_f=\sum_{a\in\grp{G}^{\star}}Nb_{\Delta_{f,a}}. 
\end{equation*}

Second, in \cite{panario11}, Panario, Sakzad, Stevens and Wang introduced the
ambiguity of $f$, which measures the overall injectivity of $\Delta_{f,a}$. Let $\alpha_i(f)=\# \{(a,b)\in \grp{G}^{\star}\times \grp{G}\,:\, \#\Delta_{f,a}^{-1}(b)=i\}$. The \textit{ambiguity} of $f$ is defined by 
\begin{equation*}
A(f)=\sum_{0\le i\le q}\alpha_i(f)\binom{i}{2}. 
\end{equation*}
By fixing each $a\in \grp{G}^{\star}$ at a time when counting $\alpha_i(f)$, $A(f)$ can also be viewed as the sum $\sum_{a\in\grp{G}^{\star}}$ of \textit{row-$a$-ambiguity}, \[A_{r=a}(f)=\sum_{b\in\grp{G}}\binom{\#\Delta_{f,a}^{-1}(b)}{2}. \]

It was shown in \cite{fu19} that $NB_f$ and $A(f)$ are equivalent. In particular, when $f:\grp{G}\to\grp{G}$, $A(f)=NB_f/2$. In fact, one can also express both of them in terms of $N_2(\Delta_{f,a})$ as follows:
\begin{equation*}
\begin{aligned}
N_2(f)&=\#\{(x,y)\,:\, f(x)=f(y),\, x\neq y\}=\#\{(x,y)\,:\, f(x)=f(y)\}-q\\
&=\sum_{b\in\grp{G}}\#\{(x,y)\,:\, f(x)=f(y)=b\}-q\\
&=\sum_{b\in\grp{G}}\left(\#f^{-1}(b)\right)^2-q\\
&=Nb_f. 
\end{aligned}
\end{equation*}
Hence, $NB_f=\sum_{a\in\grp{G}^{\star}}Nb_{\Delta_{f,a}}=\sum_{a\in\grp{G}^{\star}}N_2(\Delta_{f,a})$. One the other hand, 
\begin{equation*}
\begin{aligned}
A_{r=a}(f)&=\sum_{b\in\grp{G}}\binom{\#\Delta_{f,a}^{-1}(b)}{2}\\
&=\sum_{r=2}^{u(f)}\binom{r}{2}\#\{b\in\grp{G}\,:\, \#\Delta_{f,a}^{-1}(b)=r\}\\
&=\sum_{r=2}^{u(f)}\binom{r}{2}M_r(\Delta_{f,a})\\
&=\frac{N_{2}(\Delta_{f,a})}{2}. 
\end{aligned}
\end{equation*}
Therefore, $A(f)=\sum_{a\in\grp{G}^{\star}}A_{r=a}(f)=\sum_{a\in\grp{G}^{\star}}\frac{N_{2}(\Delta_{f,a})}{2}$. 

From the definitions, an obvious distinction between $\pres(f)$, $NB_f$ and $A(f)$ is that $\pres(f)$ is about the bijectivity of $f$ itself, while $NB_f$ and $A(f)$ are about the bijectivity of the difference operators of $f$. Moreover, even if we are looking at the same function, $\pres(\Delta_{f,a})$ is dependent on all $N_s(\Delta_{f,a})$, while $Nb_{\Delta_{f,a}}$ and $A_{r=a}(f)$ are both based on $N_2(\Delta_{f,a})$. In fact, it is possible to have two functions with different $\pres$ but with the same $N_2$. For example, define two functions $f,h$ over $\ff{7}$ by 
\begin{equation*}
f(x)=
\begin{cases}
0 &\text{ when $x=0,1$},\\
2 &\text{ when $x=2,3$},\\
4 &\text{ when $x=4,5$},\\
6 &\text{ when $x=6$}, 
\end{cases}
\qquad\text{ and }\qquad
h(x)=
\begin{cases}
0 &\text{ when $x=0,1,2$},\\
x &\text{ when $x=3,4,5,6$}.
\end{cases}
\end{equation*}
The preimage distributions are $(M_0(f),M_1(f),M_2(f),M_3(f))=(3,1,3,0)$ and $(M_0(h),M_1(h),M_2(h),M_3(h))=(2,4,0,1)$. Both functions have $N_2(f)=N_2(h)=6$. However, $\pres(h)=3$ by Theorem \ref{th:lb_ub_pres} and \ref{th:LB=UB}, while $\pres(f)=2$ because it can be made into the identity function by adding the function
\begin{equation*}
g(x)=
\begin{cases}
0 &\text{ when $x=0,2,4,6$},\\
1 &\text{ when $x=1,3,5$}. 
\end{cases}
\end{equation*}

\bibliographystyle{amsplain}
\bibliography{alteredbibfile.bib}

\end{document}